\documentclass[notitlepage,a4paper,aps,prd,onecolumn,superscriptaddress,nofootinbib,groupedaddress]{revtex4-2}

\usepackage{amssymb}
\usepackage{amsfonts}
\usepackage{amsmath}
\usepackage[ignoreall]{geometry}

\usepackage{empheq}

\usepackage{enumitem}

\setcounter{secnumdepth}{4}

\setcounter{MaxMatrixCols}{10}

\newtheorem{theorem}{Theorem}

\newtheorem{corollary}[theorem]{Corollary}

\newtheorem{definition}[theorem]{Definition}

\newtheorem{proposition}[theorem]{Proposition}

\newenvironment{proof}[1][Proof]{\noindent\textbf{#1.} }{\ \rule{0.5em}{0.5em}}
\sloppy
\flushbottom

\usepackage{hyperref}
\hypersetup{
    colorlinks=true,
    linkcolor=blue,
    filecolor=magenta,      
   citecolor=blue
}

\usepackage{color}

\definecolor{vert}{rgb}{0.,0.65,0.}


\newcommand{\diff}{\mathrm{d}}

\usepackage{mathrsfs}

\newcommand{\tens}[1]{\mathbf{#1}}
\newcommand{\tensi}[1]{\boldsymbol{#1}}
\newcommand{\metric}{\boldsymbol{g}}

\usepackage{accents}
\newcommand{\lc}[1]{\accentset{\circ}{#1}}

\newcommand{\ud}[3]{#1^{#2}_{\phantom{#2}#3}}
\newcommand{\du}[3]{#1^{\phantom{#2}#3}_{#2}}


\DeclareMathOperator{\met}{Met}
\DeclareMathOperator{\id}{Id}

\begin{document}

\title{Metric-affine cosmological models and the inverse problem of the
calculus of variations.\\
Part 1: variational bootstrapping - the method}

\author{Ludovic Ducobu}
\email{ludovic.ducobu@umons.ac.be}
\affiliation{Transilvania University of Brasov, Brasov, Romania}
\affiliation{University of Mons, Mons, Belgium}

\author{Nicoleta Voicu}
\email{nico.voicu@unitbv.ro}
\affiliation{Transilvania University of Brasov, Brasov, Romania}
\affiliation{Lepage Research Institute, Presov, Slovakia}

\begin{abstract}
The method of variational completion allows one to transform an (in principle, arbitrary) system of partial differential equations -- based on an intuitive ``educated guess'' -- into the Euler-Lagrange one attached to a Lagrangian, by adding a canonical correction term. Here, we extend this technique to theories that involve at least two sets of dynamical variables : we show that an educated guess of the field equations with respect to one of these sets of variables only is sufficient to variationally complete these equations and recover a Lagrangian for the full theory, up to terms that do not involve the respective variables.

Applying this idea to natural metric-affine theories of gravity, we prove that, starting from an educated guess of the metric equations only, one can find the full metric equations, together with a generally covariant Lagrangian for the theory, up to metric-independent terms; the latter terms (which can only involve the distortion of the connection) are then completely classified over 4-dimensional spacetimes, by techniques pertaining to differential invariants.
\end{abstract}

\maketitle




\section{Introduction}

General Relativity (GR) offers a model for gravity, based on Riemannian geometry, which is in excellent agreement with most observations but still exhibits problems at the largest (\emph{e.g.}, rotational curves of galaxies or the accelerated expansion of the Universe) and the smallest scales (\emph{e.g.}, tensions with quantum theory). This indicates that it is necessary to look for a more general gravitational theory. Whereas there is an ongoing debate on which extensions of Riemannian geometry 
are the most suitable, there is quite a wide consensus that the dynamics of such an extended model for gravity should be based on the principle of minimal action -- \emph{i.e.}, on the calculus of variations.

Typically, in building extensions of GR, once fixed the kinematics, one then starts by postulating a Lagrangian; the associated field equations and their consequences being subsequently derived from it. Examples of such modified theories include Horndeski gravity, where the spacetime geometry is still described by Riemannian geometry but an additional scalar degree of freedom is present, see \cite{horndeski, Kobayashi:2019hrl, Langlois:2018dxi}, and metric-affine theories of gravity, where the gravitational degree of freedom comes from both the spacetime metric and an independent affine connection, see \cite{Hehl:1976, Hehl:1994ue, Blagojevic:2012bc, CANTATA:2021ktz}.\\
Yet, the choice of these Lagrangians is very often based on formal arguments, aimed to \emph{indirectly} control the expected behaviour of the solutions of the associated Euler-Lagrange equations. While we would not argue against the soundness of this approach in itself, by first focusing on the Lagrangian, this sole procedure usually fails to single out a theory, or a class of theories, staying close from a desired phenomenological behaviour. One may then want to find a refined, or, at least, complementary, technique for selecting the best models.

On account for that, we propose to revert for a moment the roles and start from the field equations. More precisely, imagine we postulate a form of the desired \textit{field equations }(which may be variational or not), based on some physical principles or phenomenological considerations. A Lagrangian, together with the ``corrected'' -- fully variational -- field equations, can then be derived from this ``educated guess'', by the so-called technique of \textit{canonical variational completion} (or briefly, \textit{variational completion}), introduced\footnote{Whereas such a route had been taken in specific, isolated cases (\emph{e.g.}, Horndeski theories, \cite{horndeski}), a general and systematic procedure in this sense, was first introduced by D. Krupka together with the second author of the present paper, in  \cite{canonical-var-completion}.} in \cite{canonical-var-completion}.

To give an idea on this technique, a motivating example was the historically first version of the Einstein field equations:%
\begin{equation}
R_{\mu \nu }=\kappa T_{\mu \nu },  \label{incomplete_efe}
\end{equation}%
which accurately predicted some physical facts, but was inconsistent with local energy-momentum conservation. In this case, the variational completion procedure gives precisely%
\begin{equation}
R_{\mu \nu }-\frac{1}{2}Rg_{\mu \nu }=\kappa T_{\mu \nu }  \label{efe}
\end{equation}%
as ``corrected'' field equations; the canonical correction term (which is, in this example, $-\frac{1}{2} Rg_{\mu \nu }$) is expressed in terms of the \textit{Helmholtz form} of the given PDE system, \cite{Krupka-book}, which measures the obstruction from variationality of the original equations.

\bigskip

This paper is the first of a two-part work aiming to formulate metric-affine theories of gravity using a procedure based on variational completion.

In the present paper, we extend the variational completion method to situations where one wants to build a theory depending on more than one dynamical variable, but only has an ``educated guess'' at the field equations of \textit{one} of these dynamical variables, say $y^{A}$. In this case, we show in Theorem \ref{thm:partial_var_compl} that one can still completely determine a Lagrangian for the theory, up to boundary terms and terms that have no dependence on $y^{A}$ or its derivatives. Such a setup is particularly appealing for the construction of modified theories of gravity, in general, since once can apply this procedure with $y^{A}=g^{\mu \nu }$, taking lessons from GR for the ``educated guess'' of the metric field equations.

Here, once formally established the procedure, we show how it applies to metric-affine theories of gravity. Starting from some postulated (variational or not)\ metric equations $\mathfrak{G}_{\mu \nu}=0,$ we determine the closest variational equations to these. We then classify all possible corresponding generally covariant Lagrangians, up to boundary terms, as follows:
\begin{enumerate}
\item According to Theorem \ref{thm:partial_var_compl}, all the terms in the Lagrangian containing the metric or its derivatives can be ``bootstrapped'' by variational completion.
\item Terms that do not involve the metric (and thus cannot be recovered by the above procedure) are then shown, using specific techniques for finding differential invariants, \cite{Janyska}, \cite{Kolar}, to be of order at most one in the connection and polynomial (of bounded degree) in the distortion tensor components and their derivatives. A full list of these possibilities, over 4-dimensional  backgrounds, is presented in Appendix \ref{appx:classif}.
\end{enumerate}

In the forthcoming Part 2 of this work\footnote{currently under writing}, we apply the presented method to select the metric-affine models that produce equations closest to those of the $\Lambda$CDM model of cosmology. The present paper then aims to establish the formal aspects underlying the application presented in Part 2.

\bigskip

The paper is structured as follows : Section \ref{sec:varcompl} is a somewhat minimalist (though, we hope, self-contained) review of the method of variational completion. Section \ref{sec:bootstrap} then extends variational completion to the situation where one only has an educated guess at the form of a part of the equations. We also present one example showing how the procedure behaves in a well-known case. Further, in Section \ref{sec:metric-affine}, we apply this method to the case of metric-affine theories, to find all possible metric-dependent terms in the Lagrangian, starting from an approximate form (educated guess) of the metric equations. All metric-independent generally covariant Lagrangians on 4-dimensional spacetimes are then determined in Section \ref{sec:torsion}. Finally, in Section \ref{sec:Conclusion}, we summarise our findings and present further directions of research.

\section{Variational completion of differential equations}
\label{sec:varcompl}

This section presents in brief the method of canonical variational completion introduced in \cite{canonical-var-completion}. For more details on the formalism of the calculus of variations on fibered manifolds, we refer to the monographs \cite{Krupka-book} and \cite{Giachetta}.

Consider an arbitrary PDE system of $n$ equations of order $r$, in $n$ dependent variables, say $y^{B} = y^{B}\left(x^{\mu}\right)$ : 
\begin{equation}
\mathcal{E}_{A}(x^{\mu },\partial_\mu y^B, \cdots, \partial_{\mu_1}\cdots\partial_{\mu_r}y^B)=0.  \label{PDE}
\end{equation}%

The inverse problem of the calculus of variations consists in finding out if the given system is (locally or globally) \textit{variational}, \emph{i.e.}, if it arises (locally, respectively, globally) as the Euler-Lagrange system attached to some Lagrangian. The question of local variationality is usually answered by means of some differential relations involving $\mathcal{E}_{A},$ called the  \textit{Helmholtz conditions}. In the event of a negative answer, it is sometimes (under some quite strong constraints) possible to transform the given system into an equivalent one which is locally variational, by means of \textit{variational multipliers}, see \cite{inv-problem-book}.

The method of variational completion proposes a completely different approach: it transforms the given PDE system (\ref{PDE}) into a -- generally, non-equivalent -- variational one, by canonically adding a correction term. The canonical correction term, which is expressed in terms of the Helmholtz coefficients of the given system, is built so as to measure the obstructions from variationality of the original system and can, with a\ few exceptions, always be constructed. The method can thus also act as an elegant way of checking variationality.

\subsection{Geometric setup : fields, Lagrangians, Euler-Lagrange forms and source forms}

In Lagrangian field theories, physical fields are understood as local sections of fibered manifolds.

\bigskip

A fibered manifold is a triple $(Y,\pi ,M),$ where $M$ and $Y$ are smooth manifolds of dimensions $m$ and $m+n$ respectively, and $\pi :Y\rightarrow M$ is a surjective submersion of class $\mathcal{C}^{\infty}$, called the projection. $Y$ is called the \textit{total space} and $M$ the \textit{base manifold}. 
This structure allows one to find on $Y$ an atlas consisting of \textit{fibered charts} $\left(V,\psi \right)$, $\psi =\left( x^{\mu }, y^{A}\right)$, in which $\pi$ is represented as
\begin{equation}
\pi :\left( x^{\mu },y^{A}\right) \mapsto \left( x^{\mu }\right),
\end{equation}
$\left(\pi(V), \psi_M\right)$, $\psi_M \coloneqq (x^\mu)$, being a chart on the base manifold $M$.
Fibered manifolds are physically interpreted as configuration spaces of physical systems and include, as a subclass, fiber bundles.

Most typically, these configuration spaces actually belong to the more restrictive subclass of \textit{natural bundles}, which are obtained via a ``universal'', functorial construction $\mathfrak{F} : M\mapsto Y \coloneqq \mathfrak{F}M$ that applies to the whole category of $m$-dimensional manifolds (\emph{e.g.}, the tangent/cotangent bundle, tensor bundles, connection bundles). On natural bundles, each local coordinate system $\left( x^{\mu }\right) $ on $M$ induces so-called natural coordinates $\left( x^{\mu },y^{A}\right) $ on $Y;$ accordingly, each coordinate change on $M$ induces a natural coordinate change on $Y$ (\emph{e.g.}, by a tensor, or connection coefficients rule).

In the following, unless elsewhere specified, by $\left( Y,\pi ,M\right)$ we will denote a general fibered manifold (with no extra structure).

\bigskip

Local \textit{sections }$\gamma :U\rightarrow Y$, defined by $\pi \circ \gamma =\id_{U}$ (where $U\subset M$ is open) are described in fibered coordinates by $(x^{\mu }) \mapsto \left( x^{\mu },y^{A}\left( x^{\mu }\right) \right)
,$ \emph{i.e.}, by specifying a dependence betweeen the (\textit{a priori} independent) coordinates $x^{\mu }$ and $y^{A}$ :
\begin{equation}
y^{A}=y^{A}\left( x^{\mu }\right).
\end{equation}%
These are, as announced above, physically interpreted as \textit{fields}.

\bigskip

Adding derivatives of the field variables (up to some order $r$) into the picture is tantamount to considering jets of order $r$ of sections : given a section $\gamma : U\to Y$ and $x\in U$,
\begin{equation}
\label{eq:jet}
J_{x}^{r}\gamma \coloneqq \left( x^{\mu };y^{A}\left( x^{\mu }\right) ;\partial_{\nu }y^{A}\left( x^{\mu }\right) ;\cdots;\partial _{\nu _{1}}\cdots\partial_{\nu _{r}}y^{A}\left( x^{\mu }\right) \right) .
\end{equation}%
It is worth mentioning that the definition of the jet is independent of the choice of fibered charts, see \cite{Krupka-book}.

The set $J^{r}Y=\left\{ J_{x}^{r}\gamma ~|~\gamma :U\rightarrow Y~\text{-section, }x\in U\right\} $ of jets of all local sections of $Y,$ at all points of their domains, is called the \textit{jet bundle} of order $r$ of $Y$. This is a smooth manifold, with naturally induced coordinate charts $\left( V^{r},\psi ^{r}\right)$, with
\begin{equation}
V^{r}\coloneqq J^{r}V,~\ \ \ \psi ^{r}\coloneqq (x^{\mu},y^{A},y_{~\nu }^{A},\cdots,y_{~\nu _{1}\cdots\nu _{r}}^{A}).
\end{equation}
The coordinate functions $y_{~\nu _{1}\cdots\nu _{s}}^{A}$ (where $\nu _{1}\leq\cdots\leq \nu _{s}$ and $s\leq r$) can be interpreted as slots into which, inserting a \textit{prolonged section} $J^{r}\gamma : U\rightarrow J^{r}Y : x\mapsto J_{x}^{r}\gamma ,$ one gets the partial derivatives up to order $r$ at $x$ of the functions $y^{A}=y^{A}\left( x^{\mu }\right) $ associated with $\gamma$; that is
\begin{equation}\label{eq:placeholder}
y_{~\nu _{1}\cdots\nu _{s}}^{A}\left( J_{x}^{r}\gamma \right) =\partial _{\nu_{1}}\cdots\partial _{\nu _{s}}y^{A}\left( x^{\mu }\right) .
\end{equation}

Jet bundles $J^rY$ come naturally equipped with the structure of a fibered manifold (induced by that of $Y$) in multiple ways :
\begin{enumerate}
\item With base $M$, with projection
\begin{equation}
\label{eq:pir}
\pi^r : J^rY \to M : J^r_x\gamma \mapsto \pi^r(J^r_x\gamma) \coloneqq x.
\end{equation}

\item With base $J^sY$, for $s\le r$, with projection
\begin{equation}
\label{eq:pir}
\pi^{r,s} : J^rY \to J^sY : J^r_x\gamma \mapsto \pi^{r,s}(J^r_x\gamma) \coloneqq J^s_x\gamma,
\end{equation}
which ``forgets all derivatives of order more than $s$''.
\end{enumerate}

\bigskip

In principle, a locally defined differential form on $J^{r}Y$ is expressed, in fibered charts, as a linear combination of wedge products of the differentials $\diff x^{\mu }$, $\diff y^{A},$ $\diff y_{~\mu}^{A},\cdots,\diff y_{~\mu_{1}\cdots\mu _{r}}^{A}.$ Yet, for our purposes, only two classes of such forms will be relevant:

\begin{enumerate}
\item \textit{Horizontal }$k$-\textit{forms}, with $k\leq m$, which can be written (in one, then, in any fibered chart) as linear combinations of wedge products $\diff x^{\mu _{1}}\wedge \cdots\wedge \diff x^{\mu _{k}}$ only. Their set will be denoted by $\Omega _{k,M}\left( V^{r}\right)$.

The most important subclass is represented by \textit{Lagrangians }of order $r$, which are defined as horizontal forms of maximal rank $k=m.$ In coordinates : 
\begin{equation}
\lambda =\mathcal{L}\diff x\in \Omega _{m,M}\left( V^{r}\right) ,
\end{equation}%
where $\diff x\coloneqq \diff x^{0}\wedge \cdots\wedge \diff x^{m-1}$ and $\mathcal{L}=\mathcal{L}(x^{\mu},y^{A},y_{~\nu }^{A},\cdots,y_{~\nu _{1}\cdots\nu _{r}}^{A})$ is the Lagrangian density. 

In particular, on a natural bundle $\mathfrak{F}M $, a \textit{generally covariant, }or \textit{natural}, Lagrangian of order $r$ is a globally defined Lagrangian $\lambda \in \Omega_{m,M}\left( J^{r}\mathfrak{F}M\right) $ which is invariant under changes of natural coordinates on $J^{r}\mathfrak{F}M $ induced by completely arbitrary coordinate changes (local diffeomorphisms) of $M$; in other words, a natural Lagrangian is a Lagrangian that makes sense globally, with the same formula, over any $m$-dimensional spacetime manifold $M$.

\item \textit{Source forms, }or \textit{dynamical forms, }which are locally defined $\left( m+1\right) $-forms expressible (in any fibered chart) as :
\begin{equation}
\mathcal{E=E}_{A}\diff y^{A}\wedge \diff x\in \Omega _{m+1}(V^{r}),
\label{source_form}
\end{equation}%
where $\mathcal{E}_{A}=\mathcal{E}_{A}(x^{\mu},y^{A},y_{~\nu }^{A},\cdots,y_{~\nu_{1}\cdots\nu _{r}}^{A})$. For a coordinate--free definition of the concept, employing the notion of contact form and bundle projections, we refer to \cite{Krupka-book}. The most notorious example of a source form is the \textit{Euler-Lagrange form }$\mathcal{E}\left( \lambda \right) $ of a Lagrangian $\lambda$, given by
\begin{equation}
\mathcal{E}_{A}=\dfrac{\delta \mathcal{L}}{\delta y^{A}}.
\end{equation}%
More generally, any PDE system of $m=\dim M$ equations of order $r,$ in the unknowns $y^{A}=y^{A}\left(x^\mu\right)$ can be encoded into a source form; for instance, equations (\ref{PDE}) above can be encoded as :
\begin{equation}
J^{r}\gamma^* \mathcal{E} = 0,
\end{equation}
that is, using a fibered chart,
\begin{equation}\label{eq:sourceformonchart}
\mathcal{E}_{A}\circ J^{r}\gamma = 0,
\end{equation}
where $\gamma $ is the section given by $y^{A}=y^{A}\left(x^\mu\right)$.
\end{enumerate}

A source form $\mathcal{E}$ on $J^{r}Y$ is called \textit{locally variational} if, for any chart domain $V^{r}\subset J^{r}Y,$ there exists a Lagrangian $\lambda _{V}$ whose Euler-Lagrange form is $\mathcal{E}$. Accordingly, $\mathcal{E}$ is called \textit{globally variational} if there exists a Lagrangian $\lambda$ defined throughout $J^{r}Y,$ such that $\mathcal{E} = \mathcal{E}\left(\lambda \right)$.

\bigskip

Before going further, we should introduce one operation allowing to build horizontal forms on jet bundles. The horizontalization operator is the unique mapping $h:\Omega(J^{r}Y)\rightarrow \Omega(J^{r+1}Y)$, compatible with the wedge product\footnote{that is, given $\theta, \rho$ differential forms on $J^rY$, the differential form (on $J^{r+1}Y$) $h(\theta\wedge\rho) = h\theta\wedge h\rho$.} (\emph{i.e.} $h$ is a morphism of exterior algebras) and obeying, in any fibered chart,
\begin{equation}
hf=f\circ \pi ^{r+1,r} \text{ and } h\diff f=\diff_{\mu}f\diff x^{\mu},  \label{eq:hdf}
\end{equation}%
for all $f : J^{r}Y \to\mathbb{R}$; here, $\diff_{\mu}f \coloneqq \partial _{\mu}f + \dfrac{\partial f}{\partial y^{A}}y_{~\mu}^{A}+ \cdots  + \dfrac{\partial f}{\partial y_{~\nu_{1}\cdots \nu_{r}}^{A}}y_{~\nu_{1}\cdots \nu_{r}\mu}^{A}$ denotes
the \textit{total derivative} (of order $r+1$) with respect to $x^{\mu}.$
As a direct consequence of \eqref{eq:hdf}, one has
\begin{equation}
h\diff x^{\mu} = \diff x^{\mu},\ h\diff y^{A} = y_{~\mu}^{A}\diff x^{\mu}, \cdots,\ h\diff y_{~\nu_{1}\cdots \nu_{k}}^{A } = y_{~\nu_{1}\cdots \nu_{k}\mu}^{A}\diff x^{\mu},\ ~\ \ k=1,\cdots ,r.  \label{eq:horizontalization_basis}
\end{equation}
Relations \eqref{eq:horizontalization_basis} and the compatibility with the wedge product then ensure that, given a form $\rho\in\Omega_{k}(J^rY)$, $h\rho\in\Omega_{k,M}(J^{r+1}Y)$; \emph{i.e.} $h\rho$ is horizontal.

Actually, the latter relations point out a natural differential operator
\begin{equation*}
\diff_H \coloneqq h \circ \diff : \Omega(J^{r}Y)\rightarrow \Omega(J^{r+1}Y),
\end{equation*}	
where $\diff : \Omega(J^rY) \to \Omega(J^rY)$ is the exterior derivative on $J^rY$. This operator is called the horizontal (or total) exterior derivative, \cite{Anderson}, and obeys :
\begin{equation} \label{eq:d_H^2}
\diff_H \circ \diff_H \equiv 0,
\end{equation}
(the latter relation can be easily checked as total derivatives commute).

By construction, $\diff_H$ relates to what remains of the exterior derivative on $J^rY$ when evaluated along a prolonged section in the sense that, given a form $\rho\in\Omega(J^rY)$,
\begin{equation}
\diff_M(J^r\gamma^* \rho) = J^{r+1}\gamma^* \diff_H\rho,
\end{equation}
where $\diff_M : \Omega(M) \to \Omega(M)$ is the exterior derivative on the base manifold $M$.

Furthermore, a Lagrangian $\lambda\in \Omega_{m,M}(V^r)$ will be variationally trivial (have an identically vanishing Euler-Lagrange form) if and only if there exist a form $\mu\in\Omega_{m-1}(V^{r-1})$ such that $\lambda = \diff_H \mu$, \cite{Krupka-book} p. 134. This requirement is further equivalent to the condition that, for any fibered chart $(V,\psi)$, there exist functions $g^\mu : V^r \to \mathbb{R}$ such that $\lambda = \left(\diff_\mu g^\mu\right) \diff x$, \cite{Krupka-book} p. 134.

\subsection{Canonical variational completion}

In the following, we will only study \textit{local} variationality; thus, we can assume with no loss of generality that $V^{r}\subset \mathbb{R}^{N}$ (for an appropriate value of $N$) and omit the explicit mention of $\psi^{r}.$

Consider now an arbitrary PDE system (\ref{PDE}) and build, as in \eqref{source_form}, the corresponding source form $\mathcal{E}=\mathcal{E}_{A}\diff y^{A}\wedge \diff x\in \Omega_{m+1}(V^{r})$. We will assume from the beginning that the domain $V^{r}$ is \textit{vertically star-shaped} with center $\left( x^{\mu },0,\cdots,0\right) $; that is, for every point $(x^{\mu},y^{A},y_{~\nu}^{A},\cdots,y_{~\nu _{1}\cdots\nu _{r}}^{A})\in V^{r},$ the whole segment $(x^{\mu },ty^{A},ty_{~\nu }^{A},\cdots,ty_{~\nu _{1}\cdots\nu _{r}}^{A})$, $t\in \lbrack 0,1]$, joining the center with the given point, lies in  $V^{r}$.

Under this assumption, one can introduce on the given chart domain $V^{r}$, the \textit{Vainberg-Tonti Lagrangian} $\lambda _{\mathcal{E}}=\mathcal{L}_{\mathcal{E}}\diff x,$ where\footnote{The idea behind the Vainberg-Tonti Lagrangian construction is the same as the one in the proof of the Poincar\'{e} Lemma, relating closed differential forms to exact ones : $\lambda _{\mathcal{E}}$ is actually the result of applying to $\mathcal{E}$ a \textit{homotopy operator} (the same as in the mentioned proof), see \cite{Krupka-book}, \cite{Olver}.}, \cite{Krupka-book} :
\begin{equation}
\mathcal{L}_{\mathcal{E}} \coloneqq y^{A}\int_{0}^{1}\mathcal{E}_{A}\circ \chi _{t}%
\diff t,  \label{eq:vtlag}
\end{equation}%
where $\chi _{t} :V^{r}\rightarrow V^{r}$ is given by
\begin{equation}
\chi _{t}(x^{\mu },y^{A},y_{~\nu }^{A},\cdots,y_{~\nu _{1}\cdots\nu_{r}}^{A})\coloneqq (x^{\mu },ty^{A},ty_{~\nu }^{A},\cdots,ty_{~\nu _{1}\cdots\nu_{r}}^{A}).  \label{def:chi_t}
\end{equation}

\vspace{50pt}
\textbf{Remarks:}
\begin{enumerate}
\item The Vainberg-Tonti Lagrangian is, in general, just defined on a coordinate chart. Yet, on \textit{tensor bundles} $Y$ -- which is our case of interest -- if the quantities $\mathcal{E}_{A}$ are tensor densities, then $\mathcal{L}_{\mathcal{E}}$ is a scalar density and $\lambda _{\mathcal{E}}=\mathcal{L}_{\mathcal{E}}\diff x$ is globally well defined. Actually, in this case, $\lambda _{\mathcal{E}}$ is generally covariant provided that $\mathcal{E}$ itself is so.

\item If the coordinate chart domain $V^{r}$ is not vertically star-shaped with center $\left( x^{\mu },0,\cdots,0\right) $ (which is obviously the case in gravity theories, where one cannot set all the field components $g_{\mu\nu }$ to zero), the Vainberg-Tonti Lagrangian will be understood as a limit :
\begin{equation}
\mathcal{L}_{\mathcal{E}}\coloneqq \underset{a\rightarrow 0}{\lim }\ y^A \int_{a}^{1}\mathcal{E}_{A}\circ \chi _{t}\diff t  \label{VT_Lag_limit}
\end{equation}%
and it makes sense whenever this limit is finite, see \cite{GB-var-completion}.
\end{enumerate}

\bigskip

The Vainberg-Tonti Lagrangian $\lambda _{\mathcal{E}}$ gives rise to the Euler-Lagrange expressions $\dfrac{\delta \mathcal{L}_{\mathcal{E}}}{\delta y^{A}}.$ These are related to the coefficients of $\mathcal{E}$ by
\begin{equation}
\frac{\delta \mathcal{L}_{\mathcal{E}}}{\delta y^{A}}=H_{A}+\mathcal{E}_{A},
\label{eq:helmholtz}
\end{equation}%
where $H_{A}$ are linear combinations of the coefficients $H_{AB}, H_{AB}^{\nu}, \cdots, H_{AB}^{\nu _{1}\cdots\nu _{r}}$ of the so-called \textit{Helmholtz form}, \cite{Krupka-book}, \cite{inv-problem-book}, whose vanishing is equivalent to the local variationality of $\mathcal{E}$ (see also Appendix \ref{appx:Helmholtz}, for their precise expressions in the case $r=2$). 
Consequently :
\begin{itemize}
\item If the original system is locally variational, then $\dfrac{\delta \mathcal{L}_{\mathcal{E}}}{\delta y^{A}}=\mathcal{E}_{A}$, \emph{i.e.}, the Vainberg-Tonti Lagrangian $\lambda_{\mathcal{E}}$ is a Lagrangian for the original system,

\item The quantities $H_{A}$ measure the ``obstructions from variationality'' of $\mathcal{E}_{A}$.
\end{itemize}

The following definition thus makes sense :

\begin{definition}
, \cite{canonical-var-completion}: Given a PDE system (\ref{PDE})
(equivalently, \eqref{eq:sourceformonchart}), its canonical
variational completion is the Euler-Lagrange system%
\begin{equation}
\dfrac{\delta \mathcal{L}_{\mathcal{E}}}{\delta y^{A}}\circ J^{r}\gamma =0
\end{equation}
of its attached Vainberg-Tonti Lagrangian.
\end{definition}

In other words, the canonical variational completion of a PDE system is obtained by adding, as correction terms, the ``obstruction from local variationality'', $H_{A}$, to the respective system; one Lagrangian for the ``corrected'' (now variational) system being the Vainberg-Tonti Lagrangian \eqref{eq:vtlag}.

\bigskip

\textbf{Example : Einstein tensor as the canonical variational completion of the Ricci tensor, \cite{canonical-var-completion}.}\newline
As already mentioned in the introduction, a motivation for the above construction was given by the historically first variant of gravitational field equations proposed by Einstein
\begin{equation}
R_{\mu \nu }=\kappa T_{\mu \nu }.  \label{eq:oldefe}
\end{equation}%
The configuration bundle is, in this case, the fibered manifold $(\met\left(M\right) ,\pi ,M),$ where $\met(M)$ is the set of symmetric and nondegenerate tensors of type $\left( 0,2\right) $ over a given manifold $M.$ Its sections are metric tensors $\metric$, locally described by $\left( x^{\rho }\right) \mapsto \left(g_{\mu \nu }\left( x^{\rho }\right) \right) .$ On the jet bundle $J^{2}\met(M)$, one can thus consider as local coordinates $\left( x^{\mu };g_{\mu \nu}, g_{\mu \nu ,\rho }, g_{\mu \nu ,\rho \sigma }\right) .$ With this choice, the left hand side of (\ref{eq:oldefe}) can be encoded (after raising the indices and densitizing) into the invariant source form
\begin{equation}
\mathcal{E}=R^{\mu \nu }\sqrt{\left\vert \det g\right\vert }\diff g_{\mu \nu
}\wedge \diff x\in \Omega _{m+1}\left( J^{2}\met\left(M\right) \right) ,
\end{equation}%
where $R^{\mu \nu }$ is the formal Ricci tensor (the word ``formal'' means that $R^{\mu \nu }$ is calculated by the usual formula but, in this expression, $g_{\mu \nu }$, $g_{\mu \nu ,\rho }$ and $g_{\mu \nu ,\rho \sigma }$ are regarded as independent coordinate functions on $J^{2}\met\left(M\right) ,$ not as functions of $x^{\mu}$).

A direct calculation then shows that the Vainberg-Tonti Lagrangian of $\mathcal{E}$ is actually the Einstein-Hilbert Lagrangian
\begin{equation}
\label{eq:LEH}
\lambda _{g} \coloneqq R \sqrt{\left\vert\det(g)\right\vert} \diff x,
\end{equation}
where $R$ is the formal Ricci scalar, leading to the Einstein equations (\ref{efe}).

\bigskip

Other applications of the canonical variational completion algorithm studied so far are, \emph{e.g.}, symmetrisation of canonical energy-momentum tensors in the special-relativistic limit \cite{canonical-var-completion}, Finsler gravity \cite{Finsler-var-completion}, and Gauss-Bonnet gravity \cite{GB-var-completion}.

\section{What if we only know a part of the equations?}\label{sec:bootstrap}

The above procedure is helpful as such in the case of purely metric theories of gravity. Yet, in theories of gravity employing a metric and another variable (\emph{e.g.}, a scalar field or a connection), one often has an ``educated guess'' on the form of the metric equations only -- \emph{e.g.}, resemblance with the Einstein equations with a cosmological constant. For such situations, we prove below that one can recover the Lagrangian, up to boundary terms and terms that do not involve the metric or its derivatives; as a byproduct, we find the variationally completed metric equations, which are the closest variational equations to our initial guess. Possible Lagrangian terms that are independent of the metric remain to be found by other means.

\bigskip

More generally, assume one wants to build a variational theory involving  \textit{two} groups of dynamical variables, say $y^{A}=y^{A}\left( x^{\mu}\right) $ and $z^{I}=z^{I}\left( x^{\mu }\right) $, but only have an educated guess about the field equations with respect to $y^{A}$ :
\begin{equation}
\mathcal{E}_{A}\Big(x^{\mu };y^{B}, \partial_{\nu} y^{B}, \cdots, \partial_{\nu _{1}}\cdots\partial_{\nu _{r}}y^{B}; z^{I}, \partial_{\nu}z^{I}, \cdots, \partial_{\nu_{1}}\cdots\partial_{\nu_{r}}z^{I}\Big) = 0,  \label{eq:PDE_syst}
\end{equation}%
where the number of equations is equal to the number of  $y^{A}$-variables. In the following, we will try to recover the variationally completed $y^{A}$-equations and, in so far as possible, the missing $z^{I}$-equations.

\subsection{Variational bootstrapping}

The configuration manifold corresponding to the situation described above is a fibered product manifold
\begin{equation}
Y=Y_{1}\times _{M}Y_{2},  \label{def:Y}
\end{equation}%
where $\times_M$ denotes the fibered product, which has as fibered charts $\left( V,\psi \right) ,$ $\psi =\left( x^{\mu};y^{A},z^{I}\right)$.
\\On the jet bundle $J^{r}Y\equiv J^{r}Y_{1}\times_{M}J^{r}Y_{2},$ we will denote the naturally induced charts $\left(V^{r},\psi ^{r}\right) ,$ with $\psi ^{r}=(x^{\mu };y^{A},y_{~\nu}^{A},\cdots,y_{~\nu _{1}\cdots\nu _{r}}^{A};z^{I},z_{~\nu }^{I},\cdots,z_{~\nu_{1}\cdots\nu _{r}}^{I})$ and by 
\begin{equation}
p_{1}:J^{r}Y\rightarrow J^{r}Y_{1},~\ \ \ p_{2}:J^{r}Y\rightarrow J^{r}Y_{2},
\end{equation}%
the projections onto the two factors of $J^{r}Y.$

Any source form on $V^{r}$ is then uniquely written as a sum of two components 
\begin{equation}
\mathcal{E} = \mathcal{E}_{A}\diff y^{A}\wedge \diff x+\mathcal{E}_{I}\diff z^{I}\wedge \diff x=:\mathcal{E}_{\left( 1\right) }+\mathcal{E}_{\left( 2\right) }\in \Omega _{m+1}\left(V^{r}\right) ,  \label{E_2_split}
\end{equation}%
where $\mathcal{E}_{\left( 1\right) }$ is $p_{2}$-horizontal (\emph{i.e.}, contains no terms in $\diff z^{I},\cdots,dz_{~\nu_{1}\cdots \nu_{r}}^{I}$) and $\mathcal{E}_{\left(2\right) }$ is $p_{1}$-horizontal; the functions $\mathcal{E}_{A}$ and $\mathcal{E}_{I}$ each depend, in principle, on all the coordinates $(x^{\mu };y^{B},y_{~\nu}^{B},\cdots,y_{~\nu _{1}\cdots\nu _{r}}^{B};z^{I},z_{~\nu }^{I},\cdots,z_{~\nu_{1}\cdots\nu _{r}}^{I}).$

Our postulated equations (\ref{eq:PDE_syst}) are thus completely encoded into the component $\mathcal{E}_{\left( 1\right) }$ of $\mathcal{E};$ we do not have any input information about $\mathcal{E}_{(2)}.$ Hence, the best thing we can do is to variationally complete $\mathcal{E}_{\left( 1\right) },$ as follows :

\bigskip

Fix a vertically star-shaped fibered chart $\left( V^{r},\psi ^{r}\right) $ on $J^{r}Y$ and define the \textit{partial fiber homothety} $\chi _{t,1}\coloneqq \left( \chi_{t},\id\right) :V^{r}\rightarrow V^{r},$ acting on the variables $y^{A},y_{~\nu }^{A},\cdots,y_{~\nu _{1}\cdots\nu _{r}}^{A}$ only
\begin{equation}
\chi _{t,1}(x^{\mu },y^{A},\cdots,y_{~\nu _{1}\cdots\nu_{r}}^{A},z^{I},\cdots,z_{~\nu _{1}\cdots\nu _{r}}^{I})\coloneqq (x^{\nu},ty^{A},\cdots,ty_{~\nu_{1}\cdots\nu_{r}}^{A},z^{I},\cdots,z_{~\nu_{1}\cdots\nu_{r}}^{I})
\end{equation}%
and, accordingly, the partial Vainberg-Tonti Lagrangian $\lambda _{1}= \mathcal{L}_{1}\diff x\in \Omega _{m,M}\left( J^{r}Y\right) $ as
\begin{equation}
\mathcal{L}_{1} \coloneqq y^{A}\int_0^1 \mathcal{E}_{A}\circ \chi _{t,1}\diff t.  \label{def_lambda1}
\end{equation}%
We then obtain the following result :

\begin{theorem}
\label{thm:partial_var_compl}(\textbf{Variational bootstrapping}): If
the partial differential system \eqref{eq:PDE_syst} is locally variational and the Vainberg-Tonti-type Lagrangian $\lambda _{1}=\mathcal{L}_{1}\diff x$ as in \eqref{def_lambda1} can be defined, then :

\begin{enumerate}
\item $\lambda _{1}$ is a locally defined Lagrangian for (\ref{eq:PDE_syst});

\item Any other Lagrangian $\lambda $ producing (\ref{eq:PDE_syst}) as $y^{A}$-field equations can only differ from $\lambda _{1}$ by a $y^{A}$-independent term and a boundary term :
\begin{equation}\label{eq:partial_var_compl}
\lambda =\lambda _{1}+\lambda _{2}+\lambda _{0},
\end{equation}%
where $\lambda _{2}=\mathcal{L}_{2}(x^{\mu };z^{I},z_{~\mu }^{I},\cdots,z_{~\mu_{1}\cdots\mu _{r}}^{I})\diff x$ only (\emph{i.e.}, $\lambda _{2}$ is projectable onto the second factor $J^{r}Y_{2}$) and $\lambda_{0}=\left( \diff_{\mu }f^{\mu }\right)
\diff x$ is a boundary term.
\end{enumerate}
\end{theorem}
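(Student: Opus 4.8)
The plan is to prove both statements by exploiting the fact that the partial homothety $\chi_{t,1}$ and the Euler--Lagrange operator $\delta/\delta y^{A}$ act on the $y$-tower of coordinates only, so that $(x^{\mu};z^{I},z^{I}_{~\nu},\dots)$ can be carried along as passive parameters. For part (1), I would observe that the derivation of the Vainberg--Tonti identity \eqref{eq:helmholtz} never uses that the homothety scales \emph{all} fibre coordinates: it only requires that the coordinates differentiated by $\delta/\delta y^{A}$ are exactly those being scaled. Hence the same computation, run with the remaining coordinates frozen as parameters, yields $\dfrac{\delta \mathcal{L}_{1}}{\delta y^{A}} = H_{A}+\mathcal{E}_{A}$, where now $\mathcal{E}_{A}$ and the Helmholtz terms $H_{A}$ may depend on $z^{I}$ and its derivatives. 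The hypothesis that \eqref{eq:PDE_syst} is locally variational means precisely that the $y$-Helmholtz coefficients of $\mathcal{E}_{(1)}$ vanish, so $H_{A}=0$ and $\dfrac{\delta \mathcal{L}_{1}}{\delta y^{A}}=\mathcal{E}_{A}$, which is claim (1). The single point to check with care is that the $z$-derivatives hidden inside the total derivatives $\diff_{\mu}$ appearing in $\delta/\delta y^{A}$ do not spoil the formal identity; they do not, since that identity is an algebraic consequence of integration by parts, insensitive to which extra variables ride along.

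For part (2), let $\mu \coloneqq \mathcal{L}-\mathcal{L}_{1}$ be the density of $\lambda-\lambda_{1}$, where $\lambda$ is any other Lagrangian with the same $y$-field equations. By (1), $\dfrac{\delta \mu}{\delta y^{A}}=\mathcal{E}_{A}-\mathcal{E}_{A}=0$. Note that I cannot invoke the usual ``variationally trivial $\Rightarrow$ total divergence'' theorem quoted earlier, because only the $y$-Euler--Lagrange form of $\mu$ vanishes, not its full Euler--Lagrange form; this is exactly the obstacle that the homothety resolves. I would write $\mu = \mu\circ\chi_{0,1} + \int_{0}^{1}\frac{\dint}{\dint t}\big(\mu\circ\chi_{t,1}\big)\,\dint t$. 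The first term $\mu\circ\chi_{0,1}$ is $\mu$ evaluated with all $y$-jets set to zero, hence a function of $(x^{\mu};z^{I},z^{I}_{~\nu},\dots)$ only; this is precisely the projectable density $\mathcal{L}_{2}$ of $\lambda_{2}$.

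It remains to show the integral is a boundary term. A chain-rule computation gives $\frac{\dint}{\dint t}\big(\mu\circ\chi_{t,1}\big)=\frac{1}{t}\,(\Theta\mu)\circ\chi_{t,1}$, where $\Theta \coloneqq y^{A}\frac{\partial}{\partial y^{A}}+y^{A}_{~\nu}\frac{\partial}{\partial y^{A}_{~\nu}}+\cdots$ is the $y$-scaling operator, i.e.\ the prolongation of the vertical field $y^{A}\partial/\partial y^{A}$. The standard infinitesimal first-variation (integration-by-parts) formula reads $\Theta\mu = y^{A}\dfrac{\delta \mu}{\delta y^{A}}+\diff_{\nu}W^{\nu}$ for suitable functions $W^{\nu}$, so that $\Theta\mu=\diff_{\nu}W^{\nu}$ here. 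Using the commutation $(\diff_{\nu}F)\circ\chi_{t,1}=\diff_{\nu}(F\circ\chi_{t,1})$ -- which holds because $\chi_{t,1}$ scales the $y$-slots and the explicit $t$-factor matches the scaled $y$-jets carried by $\diff_{\nu}$ -- one obtains $\frac{\dint}{\dint t}(\mu\circ\chi_{t,1})=\diff_{\nu}\big(\tfrac{1}{t}\,W^{\nu}\circ\chi_{t,1}\big)$, whence $\int_{0}^{1}\frac{\dint}{\dint t}(\mu\circ\chi_{t,1})\,\dint t=\diff_{\nu}f^{\nu}$ with $f^{\nu}\coloneqq\int_{0}^{1}\tfrac{1}{t}\,W^{\nu}\circ\chi_{t,1}\,\dint t$. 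Collecting terms yields $\mu=\mathcal{L}_{2}+\diff_{\nu}f^{\nu}$, i.e.\ \eqref{eq:partial_var_compl}.

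The main obstacle I anticipate is the convergence of this last integral at $t=0$: the factor $1/t$ is only admissible because $W^{\nu}$ is built linearly from the field components $\xi^{A}=y^{A}$ and their derivatives, so $W^{\nu}\circ\chi_{t,1}=O(t)$ and the singularity is cancelled -- precisely the finiteness issue already flagged for the Vainberg--Tonti Lagrangian in \eqref{VT_Lag_limit}. I would therefore make explicit the homogeneity of $W^{\nu}$ in the scaled variables to guarantee that $f^{\nu}$ is well defined on the vertically ($y$-)star-shaped chart domain, and check that $\Theta\mu$ is indeed $y$-homogeneous of degree one away from the center.
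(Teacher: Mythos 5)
Your proposal is correct, and for part (1) it follows essentially the paper's route: the paper's own proof is the explicit $r=2$ computation in its appendix, which establishes exactly the identity you invoke, namely $\dfrac{\delta\mathcal{L}_1}{\delta y^B}=\mathcal{E}_B-\int_0^1\left[y^AH_{BA}+y^A_{~\mu}H^\mu_{BA}+y^A_{~\mu\nu}H^{\mu\nu}_{BA}\right]\circ\chi_{t,1}\,\dint t$, with the $z$-variables riding along passively inside the total derivatives; your remark that the integration-by-parts identity is insensitive to the extra variables is precisely the content of that appendix. Where you genuinely diverge is part (2). The paper disposes of it in one sentence: since $\lambda$ and $\lambda_1$ have the same $y^A$-Euler--Lagrange expressions, they ``can only differ by terms that do not contribute to these expressions,'' i.e.\ $y$-independent terms or divergences --- which is an assertion of the very lemma that needs proving (as you correctly note, the standard ``variationally trivial $\Rightarrow$ $\diff_H$-exact'' theorem does not apply, because only the partial Euler--Lagrange form of $\mu=\mathcal{L}-\mathcal{L}_1$ vanishes). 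Your homotopy argument --- decomposing $\mu=\mu\circ\chi_{0,1}+\int_0^1\frac{\dint}{\dint t}(\mu\circ\chi_{t,1})\,\dint t$, identifying $\mu\circ\chi_{0,1}$ with $\mathcal{L}_2$, and converting the integral into $\diff_\nu f^\nu$ via the first-variation formula $\Theta\mu=y^A\frac{\delta\mu}{\delta y^A}+\diff_\nu W^\nu$ together with the commutation of $\diff_\nu$ with $\chi_{t,1}$ --- supplies an actual proof of this kernel characterization, and your attention to the $O(t)$ behaviour of $W^\nu\circ\chi_{t,1}$ addresses the same convergence caveat the paper only flags for the Vainberg--Tonti Lagrangian itself. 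What your approach buys is a self-contained justification of the decomposition \eqref{eq:partial_var_compl}; what the paper's buys is brevity, at the cost of leaving the key step to the reader.
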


\begin{proof}
As discussed above, the configuration system for (\ref{eq:PDE_syst}) is the fibered product $Y=Y_{1}\times _{M}Y_{2},$ as in (\ref{def:Y}). Fix a vertically star-shaped fibered chart domain $V^{r}\subset J^{r}Y$ and encode the left-hand sides of our equations into the source form $\mathcal{E}_{\left( 1\right) }=\mathcal{E}_{A}\diff y^{A}\wedge \diff x$ on $V^{r}$.

\begin{enumerate}
\item Proving that $\lambda _{1}$ is indeed a Lagrangian for $\mathcal{E}_{\left(1\right) }$ is done by direct computation, in a completely similar way to, \emph{e.g.}, \cite{Krupka-book}, Sec. 4.9, or \cite{GB-var-completion} -- with the only difference that here we have an extra variable $z^{I},$ which, yet, remains unaffected. For courtesy to the reader, we briefly reproduce in Appendix \ref{appx:Helmholtz} the calculation for $r=2.$ 

\item Since $\lambda $ and $\lambda _{1}$ should produce the same Euler-Lagrange expressions with respect to $y^A$, it follows that they can only differ by terms that do not contribute in any way to these expressions -- that is, terms that do not have any dependence on $y^{A}, y^{A}_\mu, \cdots, y^{A}_{\mu_{1} \cdots \mu_{r}} $, or divergence expressions.
\end{enumerate}
\end{proof}

If the given equations (\ref{eq:PDE_syst}) are not locally variational, then, the partial variational completion procedure guarantees that the correction terms $\dfrac{\delta\mathcal{L}_{1}}{\delta y^{A}}-\mathcal{E}_{A}$ still have the meaning of obstructions from $y^{A}$-variationality of $\mathcal{E}_{\left( 1\right) }$.



\subsection{A first example: Einstein-Klein-Gordon equations}

Before passing to our case of interest, which are metric-affine theories of gravity, let us check how the variational bootstrapping method works in a case where a Lagrangian is already known. A straightforward such example is the one of a single real-valued scalar field $\phi$ minimally coupled to the metric $\metric$ in the context of General Relativity.

In this case, the dynamical variables are a metric tensor and a scalar field; the corresponding configuration space is thus
\begin{equation}
Y=\met\left(M\right) \times_{M}\left(M\times\mathbb{R}\right).
\end{equation}%
Accordingly, we can consider on $J^{2}Y$ the naturally induced fibered coordinates $\left( x^{\mu}; g_{\mu \nu },g_{\mu \nu,\rho }, g_{\mu \nu ,\rho \tau }; \phi ,\phi _{\rho },\phi _{\rho \tau }\right) $.
This situation is notoriously described by the Einstein-Klein-Gordon Lagrangian $\lambda _{\text{\tiny EKG}}=\mathcal{L}_{\text{\tiny EKG}}\diff x$, with
\begin{equation}
\mathcal{L}_{\text{\tiny EKG}}=\left( \dfrac{1}{2\kappa} R - \dfrac{1}{2} g^{\alpha \beta} \phi _{\alpha}\phi _{\beta} - V\left(\phi\right)\right) \sqrt{\left\vert \det g\right\vert },  \label{L:EKG}
\end{equation}%
(where $\kappa \in \mathbb{R}_0$ is a constant and $V=V\left( \phi \right)$ is a real-valued smooth function), which produces the field equations
\begin{equation}
\left\{ 
\begin{array}{c}
G_{\mu \nu }=\kappa T_{\mu \nu }^{\left( \phi \right) } \\ 
\square \phi =V^{\prime }\left( \phi \right) 
\end{array}%
\right. ,  \label{eq:EKG}
\end{equation}%
where $\square = g^{\mu \nu }\nabla _{\mu }\nabla _{\nu }$ is the covariant d'Alembertian and%
\begin{equation}
T_{\mu \nu }^{\left(\phi\right)}=\phi _{\mu}\phi _{\nu} - \dfrac{1}{2} g^{\alpha\beta}\phi_{\alpha}\phi_{\beta} g_{\mu\nu} - V\left(\phi\right) g_{\mu\nu}.  \label{eq:Tphi}
\end{equation}

Let us pretend for one moment to have no idea about the Lagrangian (\ref{L:EKG}) and recover it  from the metric equations (\ref{eq:EKG}) by $\met\left(M\right)$-variational completion. 
Considering $g_{\mu \nu }$ as our dynamical variables, the relevant source form is $\mathcal{E}_{g}=\mathcal{E}^{\mu \nu }\diff g_{\mu \nu }\wedge \diff x$, where
\begin{equation}
\mathcal{E}^{\mu \nu } \coloneqq -\dfrac{1}{2}\left(\dfrac{1}{\kappa} G^{\mu\nu} - \phi^{\mu}\phi^{\nu} + \dfrac{1}{2} g^{\alpha\beta} \phi_{\alpha}\phi_{\beta}g^{\mu \nu } + V\left(\phi\right) g^{\mu\nu}\right) \sqrt{\left\vert \det g\right\vert },
\end{equation}%
with $\phi^\alpha \coloneqq g^{\alpha\beta}\phi_\beta$. The Vainberg-Tonti Lagrangian $\lambda =\mathcal{L}_{\phi }\diff x$ is then given by : %
\begin{equation}
\mathcal{L}_{\phi }=g_{\mu\nu}\int_0^1 \mathcal{E}^{\mu \nu }\circ \chi _{t,1}\diff t,  \label{VT_Lagrangian_phi}
\end{equation}%
where the lower integration point is understood as a limit and the partial fibered homotheties \[\chi _{t,1}:\left( x^{\mu},g_{\mu \nu },g_{\mu \nu ,\rho }, g_{\mu \nu ,\rho \tau}; \phi, \phi _{\rho }, \phi_{\rho \tau }\right) \mapsto \left( x^{\mu},t g_{\mu \nu }, t g_{\mu \nu ,\rho }, t g_{\mu \nu ,\rho \tau }; \phi, \phi _{\rho }, \phi _{\rho \tau}\right)\] only affect $g_{\mu \nu }$ and their derivatives. We then have
\begin{eqnarray}
g^{\mu \nu }\circ \chi _{t,1} \hspace{-8pt}&=&\hspace{-8pt} t^{-1}g^{\mu\nu},\sqrt{\left\vert\det g\right\vert }\circ \chi_{t,1} = t^{m/2}\sqrt{\left\vert\det g\right\vert },~\\ \lc{\Gamma}_{\nu\rho}^{\mu}\circ\chi_{t,1} \hspace{-8pt}&=&\hspace{-8pt} \lc{\Gamma}_{\nu\rho}^{\mu},~\ \ud{\lc{R}}{\mu}{\nu \rho \sigma}\circ \chi _{t,1} = \ud{\lc{R}}{\mu}{\nu \rho \sigma},~\ \lc{R}_{\nu \sigma }\circ \chi _{t,1} = \lc{R}_{\nu \sigma },~\ \ \lc{R}^{\nu\sigma }\circ \chi _{t,1} = t^{-2} \lc{R}^{\nu \sigma },
\end{eqnarray}
where $\lc{\Gamma}_{\nu\rho}^{\mu}$, $\ud{\lc{R}}{\mu}{\nu \rho \sigma}$ and $\lc{R}_{\mu \nu }$ are, respectively, the formal Christoffel symbols, formal curvature tensor and formal Ricci tensor of the Levi-Civita connection associated to $\metric$.
This then gives us~:
\begin{eqnarray}
\mathcal{L}_{\phi } \hspace{-4pt}&=&\hspace{-4pt}\displaystyle -\dfrac{1}{2} g_{\mu\nu} \int_0^1 \left\{\left(\dfrac{1}{\kappa}G^{\mu \nu} - \phi^{\mu}\phi^{\nu} + \dfrac{1}{2}g^{\alpha\beta}\phi_{\alpha}\phi_{\beta}g^{\mu \nu}\right) t^{\frac{m}{2}-2} + V\left(\phi\right) g^{\mu\nu} t^{\frac{m}{2}-1}\right\} \sqrt{\left\vert \det g\right\vert }\diff t \\ \hspace{-4pt}&=&\hspace{-4pt} \left(\dfrac{1}{2\kappa }R - \dfrac{1}{2}g^{\alpha\beta}\phi_{\alpha}\phi_{\beta} - V\left(\phi\right) \right) \sqrt{\left\vert \det g\right\vert} = \mathcal{L}_{\text{\tiny EKG}},
\end{eqnarray}%
as expected. Variation with respect to $\phi$ then allows to also recover the second equation of \eqref{eq:EKG}, the Klein-Gordon equation.

\bigskip 
According to Theorem \ref{thm:partial_var_compl}, Lagrangians that cannot be recovered by $\met(M)$-variational completion are either variationally trivial (boundary terms), or metric-independent. Let us investigate the second possibility, corresponding to the $\lambda_2$-term of \eqref{eq:partial_var_compl}. In other words, we are looking for Lagrangians that could induce modifications of the second equation of \eqref{eq:EKG} without altering the first one. Such Lagrangians $\lambda_2=\mathcal{L}_{2}\diff x$ should be horizontal $m$-forms built using $\phi$ and its partial derivatives only. The only \textit{natural (generally covariant)} operators we can use, without employing the metric or its derivatives, are the exterior derivative and the wedge product (see also the discussion in the beginning of Section \ref{sec:classif_affine} below). From the coordinate function $\phi$, one can build $\diff_H\phi  = \phi_\mu  \diff x^\mu\in \Omega_{1,M}(J^1Y)$ but this does not allow to go further, considering that $\diff_H\left(\diff_H\phi\right)\equiv 0$ and $\diff_H\phi\wedge\diff_H\phi\equiv 0$. \\
In other words, if $\dim(M)>1$ (which is our case of interest), there are no natural Lagrangians that can be built from a scalar field and its partial derivatives alone. Of course, one could think, \emph{e.g.}, of building an $m$-form by multiplication of (a function of) $\phi$ by an invariant volume form $\tensi{\epsilon}$; yet, there is no \textit{natural} choice for such an $m$-form -- at least, not without resorting to extra geometric structures such as a metric or a connection. \\
To conclude, in the case of a single, real-valued scalar field $\phi$ minimally coupled to the metric $\metric$, any natural Lagrangian can be fully recovered, up to boundary terms, from the metric equations only.\footnote{Actually, the metric equations do not need to be the Einstein equations. The argument works for any natural theory employing a metric and a scalar field.}

\section{The case of metric-affine theories}\label{sec:metric-affine}

In metric-affine theories, the dynamical variables are, \textit{a priori,} a metric $\metric$ and an independent connection $\Gamma .$ Yet, since, on the one hand, a metric automatically determines its Levi-Civita connection -- which we will in the following denote by $\lc{\Gamma}$ -- and, on the other hand, the difference between two connections is tensorial, it is customary to split $\Gamma $ as
\begin{equation}
\Gamma =~\lc{\Gamma} + \tens{L},
\end{equation}%
where $\mathbf{L}$ is a tensor field of type $\left( 1,2\right) $ over the spacetime manifold $M$ called the distortion tensor. This way, the problem of determining the pair $\left( \metric,\Gamma\right) $ is equivalent to the one of determining the pair $\left(\metric,\mathbf{L}\right).$ The latter pairs are local sections of the fibered product :
\begin{equation}
Y=\met\left(M\right) \times_{M}T_{2}^{1}M ,
\end{equation}%
where $T_{2}^{1}M$ is the vector bundle of all tensors of type $\left(1,2\right) $ over $M$; with local fibered coordinates, say, $(x^{\mu };g_{\mu \nu },\ud{L}{\mu}{\nu \rho})$. A general source form of order $r$ on $Y$ will then have a $g_{\mu \nu}$-part and an $\ud{L}{\mu}{\nu\rho}$ one :
\begin{equation}
\mathcal{E} = \mathcal{E}^{\mu \nu }\diff g_{\mu \nu}\wedge \diff x+\du{\mathcal{E}}{\mu}{\nu\rho}\diff \ud{L}{\mu}{\nu\rho}\wedge \diff x=:\mathcal{E}_{g}+\mathcal{E}_{L}.
\end{equation}

In the following, we assume that we have an educated guess for the metric equations $\mathcal{E}_{\mu\nu}=0,$ that is, we know
\begin{equation}
\mathcal{E}_{g}=\mathcal{E}^{\mu \nu }\diff g_{\mu \nu}\wedge \diff x.
\end{equation}%

\bigskip
Assuming that we are only looking for generally covariant metric-affine Lagrangians, the following result is known :

\begin{theorem}
\label{thm:Janyska}(Janyska, \cite{Janyska}) : All generally covariant Lagrangians $\lambda =\mathcal{L}\diff x$, of order $r$ in a metric tensor and another tensor variable $\tensi{\Phi}$ are given by :
\begin{equation}
\mathcal{L}=\mathcal{L}\left(\lc{\nabla}^{(r-2)}\tens{R}(\lc{\Gamma}),\lc{\nabla}{}^{(r)}\tensi{\Phi}\right),
\label{eq:Janyska}
\end{equation}%
where $\lc{\nabla}{}^{\left(r\right)} = \left(\id, \lc{\nabla}, \lc{\nabla}\lc{\nabla}, \cdots, \lc{\nabla}{}^{r}\right)$ denotes covariant $\lc{\Gamma}$-derivatives up to order $r$ and $\tens{R}(\lc{\Gamma})$ is the (formal) curvature tensor of $\lc{\Gamma}.$
\end{theorem}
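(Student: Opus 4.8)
The plan is to prove this as a \emph{reduction} (or \emph{replacement}) theorem in the spirit of Thomas and Schouten, exploiting the full force of general covariance through Riemann normal coordinates. The statement is local and pointwise: since $\mathcal{L}\,\diff x$ is a scalar density built naturally from $(g_{\mu\nu},\tensi{\Phi})$ and their partial derivatives up to order $r$, general covariance means precisely that $\mathcal{L}\,\diff x$ is invariant under the natural action of the local diffeomorphisms of $M$ on the jet bundle. Infinitesimally, this action is governed by the jet group $G^{r+1}_m$ of $(r+1)$-jets at the origin of diffeomorphisms of $\mathbb{R}^m$ fixing $0$, and the task reduces to classifying the corresponding equivariant (density-valued) maps on the fibers, as in the general theory of natural operators.

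First I would extract the elementary consequences of invariance under the affine subgroup. Invariance under translations forces $\mathcal{L}$ to carry no explicit dependence on $x^\mu$ (the induced fiber coordinates being unchanged by base translations, only an explicit coordinate dependence could survive, hence none does), while invariance under the linear group $GL(m)$ shows that $\mathcal{L}$ is a $GL(m)$-equivariant function of the remaining jet data, weighted by the density character. The substantive step is then to dispose of the higher-order, unipotent part of $G^{r+1}_m$, and this is exactly where normal coordinates enter.

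The heart of the argument is as follows. Fixing $p\in M$ and passing to Riemann normal coordinates centered at $p$ adapted to $g$, one has $g_{\mu\nu}(p)=\eta_{\mu\nu}$ and the vanishing of the symmetrized first derivatives; more importantly, the \emph{replacement lemma} asserts that every partial derivative $\partial_{\rho_1}\cdots\partial_{\rho_k}g_{\mu\nu}(p)$ is a universal polynomial in the components of $\tens{R},\lc{\nabla}\tens{R},\cdots,\lc{\nabla}^{(k-2)}\tens{R}$ at $p$, and, conversely, these curvature data are recovered from the normal-coordinate jets. The analogous statement holds for $\tensi{\Phi}$: its partial derivatives at $p$ in normal coordinates are universal expressions in $\tensi{\Phi},\lc{\nabla}\tensi{\Phi},\cdots,\lc{\nabla}^{(k)}\tensi{\Phi}$ together with curvature terms already accounted for. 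Because $\mathcal{L}$ is invariant, its value at $p$ may be computed in this preferred chart, and the only residual coordinate freedom preserving the normal-coordinate form is the orthogonal group $O(p,q)$ acting linearly on the frame at $p$. Hence $\mathcal{L}(p)$ is an $O(p,q)$-invariant function of $\eta$, of $\lc{\nabla}^{(r-2)}\tens{R}(p)$ and of $\lc{\nabla}^{(r)}\tensi{\Phi}(p)$, and re-expressing $\eta$ as the value of the actual metric in the normal frame yields exactly the claimed form \eqref{eq:Janyska}. The bookkeeping of differentiation orders is consistent: curvature is second order in $g$, so $\lc{\nabla}^{(r-2)}\tens{R}$ carries up to $r$ metric derivatives, and each covariant derivative of $\tensi{\Phi}$ adds one derivative of $\tensi{\Phi}$ plus connection (hence metric) terms, so $\lc{\nabla}^{(r)}\tensi{\Phi}$ stays within order $r$.

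I expect the main obstacle to be the replacement lemma itself: establishing rigorously that the normal-coordinate Taylor coefficients of $g$ (respectively $\tensi{\Phi}$) at each order $k$ are captured \emph{exactly}, with no loss and no redundancy, by $\lc{\nabla}^{(k-2)}\tens{R}$ (respectively $\lc{\nabla}^{(k)}\tensi{\Phi}$). This demands the combinatorial Thomas--Schouten identities for the symmetrized derivatives of the metric in normal coordinates, together with a dimension count showing that the map from jets to covariant data descends to a bijection on the quotient by the higher-order jet group, leaving precisely the $O(p,q)$-action. By contrast, the remaining steps --- translation and $GL(m)$ invariance, and the final passage from $\eta$ back to $g$ --- are comparatively routine.
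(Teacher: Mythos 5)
The paper does not actually prove this statement: Theorem \ref{thm:Janyska} is imported verbatim from the literature, with \cite{Janyska} (and, implicitly, the reduction machinery of \cite{Kolar}) standing in for the proof. So there is no in-paper argument to compare yours against; what can be said is that your sketch reproduces the standard route by which this result \emph{is} established in those references: reduction of general covariance to equivariance under the jet group $G^{r+1}_m$, elimination of the translations and of the higher-order (unipotent) part of the group via Riemann normal coordinates and the Thomas--Schouten replacement lemma, and identification of the residual freedom with $O(p,q)$ acting on the covariant data $\lc{\nabla}^{(k-2)}\tens{R}$ and $\lc{\nabla}^{(k)}\tensi{\Phi}$. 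Your order bookkeeping is correct, and you rightly flag the replacement lemma (exactness and non-redundancy of the passage from normal-coordinate jets to covariant data, i.e.\ that the quotient by the kernel of $G^{r+1}_m\to GL(m)$ is parametrized bijectively by the curvature jets) as the genuine technical content --- that is precisely the part your proposal names but does not carry out, and it is the substance of the cited proofs. One small point in your favour over the paper's formulation: your final form correctly retains the metric value itself (the $\eta$ in the normal frame) as an argument, which is needed, e.g., to build the density factor $\sqrt{\left\vert \det g\right\vert}$ and to contract indices; the statement \eqref{eq:Janyska} leaves this dependence implicit.
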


Picking $\tensi{\Phi} = \mathbf{L}$ leads to an immediate consequence :

\begin{corollary}
\label{cor:metric_affine_L} All the variationally nontrivial (non-boundary) terms of a generally covariant metric-affine Lagrangian containing the metric tensor or its derivatives can be recovered by $\met(M)$-variational completion. The only terms that cannot be recovered by this procedure are purely affine terms; these are built from the distortion tensor and its Levi-Civita covariant derivatives, in such a way that the Christoffel symbols (and their derivatives) eventually cancel out. 
\end{corollary}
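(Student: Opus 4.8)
The plan is to treat the statement as a direct consequence of the two structural results now in hand: Janyska's classification (Theorem~\ref{thm:Janyska}) and the variational bootstrapping decomposition (Theorem~\ref{thm:partial_var_compl}). First I would instantiate Theorem~\ref{thm:partial_var_compl} with $y^{A}=g_{\mu\nu}$ and $z^{I}=\ud{L}{\mu}{\nu\rho}$, so that every generally covariant Lagrangian $\lambda$ reproducing the postulated metric equations $\mathcal{E}_{\mu\nu}=0$ splits as $\lambda=\lambda_{1}+\lambda_{2}+\lambda_{0}$, with $\lambda_{1}$ the partial Vainberg-Tonti Lagrangian built from $\mathcal{E}_{g}$, $\lambda_{2}$ depending on $\ud{L}{\mu}{\nu\rho}$ and its derivatives only, and $\lambda_{0}$ a divergence. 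Since $\mathcal{E}_{g}$ is an invariant source form, $\lambda_{1}$ is itself generally covariant (as noted after \eqref{eq:vtlag}). The first assertion is then immediate: any non-boundary term carrying the metric or its derivatives is absorbed into $\lambda_{1}$, which is exactly what $\met(M)$-variational completion reconstructs.

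For the second assertion I would analyse the purely affine block $\lambda_{2}$, which we may take generally covariant (being the metric-independent, variationally nontrivial content of a generally covariant theory). Feeding it back into Theorem~\ref{thm:Janyska} with $\tensi{\Phi}=\tens{L}$ writes $\lambda_{2}$ as a function of $\lc{\nabla}^{(r-2)}\tens{R}(\lc{\Gamma})$ and $\lc{\nabla}^{(r)}\tens{L}$. Both the formal curvature and every Levi-Civita derivative $\lc{\nabla}\tens{L},\lc{\nabla}\lc{\nabla}\tens{L},\cdots$ are assembled out of the Christoffel symbols, hence out of $g_{\mu\nu}$ and its derivatives; demanding that their combination carry no net metric dependence is precisely the ``Christoffel symbols eventually cancel'' clause of the corollary, the exhaustive list of surviving combinations being deferred to Section~\ref{sec:torsion} and Appendix~\ref{appx:classif}.

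The delicate point, which I expect to be the main obstacle, is to show that this cancellation requirement actually drives the curvature arguments out of $\lambda_{2}$, leaving only combinations of $\lc{\nabla}^{(k)}\tens{L}$ whose Christoffel pieces annihilate. I would track the metric dependence by differentiation order. Evaluating $\lambda_{2}$ at $\tens{L}=0$ annihilates every $\lc{\nabla}^{(k)}\tens{L}$ and leaves a Lagrangian built from $\tens{R}(\lc{\Gamma})$ alone; being simultaneously metric-independent and, through $\tens{R}(\lc{\Gamma})$, a function of the metric, it can only be constant, and an invariant top-form with constant density vanishes -- so no pure-curvature part survives. The harder task is the mixed terms: I would differentiate the Janyska representation with respect to the top metric-derivative coordinates $g_{\mu\nu,\rho\sigma}$ and exploit that the $\partial^{2}g$ content of $\tens{R}(\lc{\Gamma})$ enters unaccompanied, whereas that of $\lc{\nabla}\lc{\nabla}\tens{L}$ and higher derivatives always carries a factor of $\tens{L}$, so the two cannot conspire to vanish; this, together with the observation that contracting curvature indices into a density would itself reintroduce the metric, is where the genuine jet-coordinate bookkeeping lies. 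Once it is settled, the residual metric dependence of $\lambda_{2}$ resides entirely in the Christoffel terms of $\lc{\nabla}^{(k)}\tens{L}$ and must cancel on its own, and the concrete enumeration of the admissible purely affine Lagrangians is handed over to the differential-invariant techniques of the later sections.
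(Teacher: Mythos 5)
Your proposal follows the same route as the paper: instantiate Theorem~\ref{thm:partial_var_compl} with $y^{A}=g_{\mu\nu}$ and $z^{I}=\ud{L}{\mu}{\nu\rho}$ to isolate the metric-independent block $\lambda_{2}$, then feed $\lambda_{2}$ into Janyska's classification (Theorem~\ref{thm:Janyska}) with $\tensi{\Phi}=\tens{L}$. The paper presents the corollary as an immediate consequence of Theorem~\ref{thm:Janyska} and does not spell out why the curvature arguments of the Janyska representation must drop out of $\lambda_{2}$; your evaluation at $\tens{L}=0$ and the jet-coordinate bookkeeping for the mixed terms supply exactly the detail the paper leaves implicit, and are consistent with its subsequent use of the corollary in Section~\ref{sec:torsion}.
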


Let us explore, in the following, all the possibilities of building natural purely affine Lagrangians, over 4-dimensional spacetimes.

\section{Purely affine invariants}\label{sec:torsion}

Fix $\dim M=4$. We will determine all generally covariant (or \textit{natural}) Lagrangians
\begin{equation}
\lambda =\mathcal{L}(\ud{L}{\alpha}{\beta\gamma},\ud{L}{\alpha}{\beta\gamma,\mu}, \cdots, \ud{L}{\alpha }{\beta\gamma,\mu _{1}\cdots\mu _{r}})\diff x
\label{lambda_L_partial_derivs}
\end{equation}%
which can be built on 4-dimensional metric-affine backgrounds $\left(M,\metric, \Gamma\right) $, from the components of the distortion tensor $\mathbf{L}$ of the connection $\Gamma = \lc{\Gamma}+\mathbf{L}$ and their derivatives -- briefly, on $L,\partial L,\partial \partial L,\cdots$ alone -- \emph{i.e.} that are completely independent of the metric $\metric$. In other words, we are looking for a differential form 
\begin{equation}
\lambda \in \Omega _{4,M} (J^{r}T_{2}^{1}M).
\end{equation}%
The technique we use below is the so-called algebraic method for finding differential invariants by Kolar, Michor and Slovak, \cite{Kolar}.

\subsection{The polynomial property}

We will first prove that natural Lagrangians depending on distortion alone must actually be polynomial in the components of $\mathbf{L}$ and their derivatives. To this aim, we start by using Theorem \ref{thm:Janyska} in the previous section, which shows that any natural metric-affine Lagrangian $\lambda$ must, apart from derivatives of the curvature tensor $\tens{R}(\lc{\Gamma})$, be expressed as a smooth function of $\ud{L}{\alpha}{\beta\gamma}$ and their $\lc{\Gamma}$\textit{-(formal) covariant derivatives}\footnote{where, by formal covariant derivatives, we mean expressions given by the same formulas as usual covariant derivatives, just, applied to the independent coordinate functions $\ud{L}{\alpha}{\beta\gamma}, \ud{L}{\alpha}{\beta\gamma, \delta}, \cdots,$ on $J^{r}Y$.} up to some order $r$. According to Corollary \ref{cor:metric_affine_L}, purely affine terms must then be of the form  : 
\begin{equation}
\lambda =\mathcal{\hat{L}}(\ud{L}{\alpha}{\beta\gamma}, \lc{\nabla}_{\mu} \ud{L}{\alpha}{\beta\gamma}, \cdots, \lc{\nabla}_{\mu _{1}}\cdots\lc{\nabla}_{\mu _{_{r}}} \ud{L}{\alpha}{\beta\gamma})\diff x.  \label{lambda_L_tensor}
\end{equation}%

The advantage of the latter writing is that all the building blocks of $\lambda $ are tensor fields; in other words, we transfer our problem of finding a mapping $\lambda $ defined on the fibers of a jet bundle, into one of finding a function defined on a Cartesian product of \textit{vector spaces}.
More precisely, the formal covariant differentiation operator $\lc{\nabla} : T_{2}^{1}M \rightarrow T_{2}^{1}M \otimes T^{\ast }M$ acts linearly on the fibers of $T_{2}^{1}M $ (to be even more precise, $\lc{\nabla}$ is a vector bundle morphism covering the identity of $M$) and thus its image is again a vector bundle; similarly, the images of further iterations of $\lc{\nabla}$ are again vector bundles. That is, fixing an arbitrary point $x\in M,$ and a chart around it, we obtain that the fibers
\begin{equation}
V_{0}\coloneqq \left( T_{2}^{1}M \right) _{x},~\ V_{1}\coloneqq \left( \lc{\nabla}T_{2}^{1}M \right) _{x},\cdots,V_{r}\coloneqq \left(\lc{\nabla}{}^{r}T_{2}^{1}M \right)_{x},\ 
\end{equation}%
where $\lc{\nabla}{}^{r} = \lc{\nabla}\cdots\lc{\nabla}$ with $r$ terms $\lc{\nabla}$, are all finite dimensional real vector spaces, whereas the restriction of the Lagrangian density to these fibers becomes a mapping
\begin{equation}\label{eq:Latx}
f\coloneqq \mathcal{\hat{L}}_{x}:V_{0}\times V_{1}\times \cdots\times V_{r}\rightarrow \mathbb{R}
\end{equation}%
defined on a Cartesian product of vector spaces. This allows us to use the following result :

\begin{theorem}
\label{thm:homog}(\textbf{Homogeneous Function Theorem}, \cite{Kolar}). Let $V_{i},$ $i=0,\cdots,r$ be finite dimensional real vector spaces. If $f :V_{0}\times \cdots\times V_{r}\rightarrow \mathbb{R}$ is a smooth function with the property that there exist $b\in \mathbb{R}$ and $a_{i}>0,$ $i=0, \cdots, r$ such that: 
\begin{equation}
k^{b}f\left( v_{0},\cdots,v_{r}\right) =f\left(k^{a_{0}}v_{0},\cdots,k^{a_{r}}v_{r}\right) ,~\ \ \ \ \forall k>0,
\end{equation}%
then, $f$ must be a sum of polynomials of degree $d_{i}$ in $v_{i},$ where $d_{i}\in \mathbb{N}$ satisfy the relation
\begin{equation}
a_{0}d_{0} + \cdots + a_{r}d_{r}=b.
\end{equation}%
If there are no non-negative integers $d_{0},\cdots,d_{r}$ with the above property, then $f$ is the zero function.
\end{theorem}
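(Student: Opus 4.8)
The plan is to pick linear coordinates adapted to the grading and reduce the statement to an analysis of the Taylor expansion of $f$ at the origin. First I would choose a basis on each $V_i$ and assemble all the resulting linear coordinates into a single tuple $(x_1, \ldots, x_n)$ on $V_0 \times \cdots \times V_r$, assigning to the coordinate $x_j$ the weight $w_j := a_i$ whenever $x_j$ is a coordinate on $V_i$. Writing $t_k$ for the weighted dilation $(x_1,\ldots,x_n)\mapsto(k^{w_1}x_1,\ldots,k^{w_n}x_n)$, the hypothesis reads $f\circ t_k = k^b f$ for all $k>0$. A monomial $x^\alpha$ then satisfies $x^\alpha\circ t_k = k^{\langle w,\alpha\rangle} x^\alpha$ with $\langle w,\alpha\rangle := \sum_j w_j\alpha_j$, and its multidegree $(d_0,\ldots,d_r)$ (where $d_i$ is the total degree in the $V_i$-variables) obeys $\langle w,\alpha\rangle = \sum_i a_i d_i$. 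The target conclusion is thus equivalent to showing that only monomials with $\langle w,\alpha\rangle = b$ occur and that $f$ is genuinely polynomial.

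Next I would extract the Taylor coefficients. Applying the multi-index derivative $\partial^\alpha$ to $f\circ t_k = k^b f$ and using the chain rule (each $\partial/\partial x_j$ contributes a factor $k^{w_j}$) gives $k^{\langle w,\alpha\rangle}(\partial^\alpha f)(t_k x) = k^b(\partial^\alpha f)(x)$; evaluating at $x=0$ yields $(k^{\langle w,\alpha\rangle} - k^b)\,\partial^\alpha f(0) = 0$ for all $k>0$. Since $k\mapsto k^s$ determines $s$, the prefactor vanishes identically only when $\langle w,\alpha\rangle = b$, so every Taylor coefficient $\partial^\alpha f(0)$ with $\langle w,\alpha\rangle\neq b$ must be zero. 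Because all weights $w_j=a_i$ are strictly positive, the equation $\langle w,\alpha\rangle = b$ has only finitely many non-negative integer solutions (and none when $b<0$), all of bounded ordinary degree; hence the full Taylor series of $f$ at the origin collapses to a single honest polynomial $P(x) = \sum_{\langle w,\alpha\rangle = b} c_\alpha x^\alpha$, which satisfies $P\circ t_k = k^b P$ and realizes exactly the multidegree pattern $\sum_i a_i d_i = b$. If no such $\alpha$ exists, then $P=0$.

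The hard part is to upgrade this formal statement to the actual equality $f=P$, i.e.\ to rule out a nonzero flat remainder. I would set $g := f - P$, which is smooth, inherits the homogeneity $g\circ t_k = k^b g$, and has all Taylor coefficients at the origin equal to zero, so $g$ is flat at $0$: for every $N$ there is $C_N$ with $|g(y)|\le C_N|y|^N$ for $|y|$ small. The decisive estimate exploits positivity of the weights: for $0<k\le 1$ one has $|t_k x|\le k^{w_{\min}}|x|$ with $w_{\min}:=\min_j w_j > 0$, so combining flatness with homogeneity gives $|g(x)| = k^{-b}|g(t_k x)| \le C_N|x|^N k^{Nw_{\min}-b}$ for each fixed $x$ and all sufficiently small $k$. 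Choosing $N$ with $Nw_{\min}>b$ and letting $k\to 0^+$ forces $g(x)=0$, and continuity gives $g(0)=0$, so $f=P$ is the asserted polynomial. This final limiting argument — converting flatness plus positive-weight homogeneity into global vanishing — is where the hypothesis $a_i>0$ is genuinely used, and it is the only subtle point; everything preceding it is bookkeeping on monomials.
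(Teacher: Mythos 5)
Your proof is correct. Note that the paper itself gives no proof of this statement -- it is quoted directly from Kolar--Michor--Slovak -- and your argument (kill all Taylor coefficients with weighted degree $\neq b$ by differentiating the homogeneity identity, observe that positivity of the $a_i$ leaves only finitely many admissible multi-indices, then use the dilation $t_k$ with $k\to 0^+$ together with flatness to annihilate the remainder $f-P$) is essentially the standard proof found in that reference, with the positivity of the weights used exactly where it must be.
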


\bigskip
Let us apply the homogeneous function theorem to our situation. The requirement that $\lambda =\mathcal{\hat{L}}\diff x$ should be invariant under any coordinate changes naturally induced by local coordinate changes $x^{\mu}=x^{\mu }\big(x^{\nu ^{\prime}}\big)$ on the base manifold implies, in particular, the invariance under homotheties (with constant factor $k>0$)
\begin{equation}
x^{\alpha }=kx^{\alpha ^{\prime}},~\ \ \alpha =0,1,2,3.  \label{homothety}
\end{equation}%
Under these transformations, the wedge product $\diff x = \diff x^{0}\wedge \diff x^{1}\wedge\diff x^{2}\wedge \diff x^{3}$ changes as
\begin{equation}
\diff x=k^{4}\diff x^{\prime}.
\end{equation}%
The invariance condition on $\lambda =\mathcal{\hat{L}}\diff x$ then implies
\begin{equation}
\mathcal{\hat{L}}(L^{\prime}, \lc{\nabla}L^{\prime}, \cdots, \lc{\nabla}{}^{r}L^{\prime})=k^{4}\mathcal{\hat{L}(}L,\lc{\nabla}L, \cdots, \lc{\nabla}{}^{r}L),
\end{equation}%
where primes on $L$ denote the components of $L$ and of its covariant derivatives in the new coordinates $x^{\alpha ^{\prime}}$ (and we have omitted the indices for simplicity). Moreover, we have that
\begin{equation*}
L^{\prime} = kL, \lc{\nabla}L^{\prime} = k^{2}L, \cdots, \lc{\nabla}^{r}L^{\prime} = k^{r+1}\lc{\nabla}{}^{r}L,
\end{equation*}%
which then leads to:
\begin{equation}
k^{4}\mathcal{\hat{L}}(L,\lc{\nabla}L, \cdots, \lc{\nabla}{}^{r}L)=\mathcal{\hat{L}}(kL,k^{2}\lc{\nabla}L, \cdots, k^{r+1}\lc{\nabla}{}^{r}L).
\end{equation}

Now, fix an arbitrary point $x\in M$. The restriction of $\mathcal{\hat{L}}$ to the fiber at $x$ of our configuration space is a smooth mapping defined on a Cartesian product of vector spaces \eqref{eq:Latx}. Applying to it the homogeneous function theorem, with
\begin{equation}
b = 4,~\ a_{0} = 1,~\ a_{1} = 2, \cdots, a_{r} = r+1,
\end{equation}%
it follows that $\mathcal{\hat{L}}$ must be a sum of $r$ homogeneous polynomials in $L,\lc{\nabla}L$ etc., whose degrees $d_{i}\in \mathbb{N}$ in the derivatives of order $i$ in $L$ satisfy
\begin{equation}
d_{0} + 2 d_{1} + 3 d_{2} + \cdots + \left(r+1\right) d_{r}=4.
\end{equation}%
In particular, we must have $r\leq 3$. We have then proven :

\begin{proposition}
Any smooth, generally covariant Lagrangian $\lambda =\mathcal{\hat{L}}\diff x$ depending  only on the distortion tensor of an affine connection $\Gamma$ and its derivatives is at order at most three. Moreover, the Lagrangian density $\mathcal{\hat{L}}$ must be expressed as a sum of homogeneous polynomials of degrees $d_{0}$, $d_{1}$, $d_{2}$ and, respectively, $d_{3}$ in the variables $L,\lc{\nabla}L$, $\lc{\nabla}\lc{\nabla}L$ and, respectively, $\lc{\nabla}\lc{\nabla}\lc{\nabla}L$ satisfying :
\begin{equation}
\label{eq:degree}
d_{0} + 2 d_{1} + 3 d_{2} + 4 d_{3}=4.
\end{equation}
\end{proposition}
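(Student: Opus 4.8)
The plan is to reduce the statement to a single application of the Homogeneous Function Theorem (Theorem \ref{thm:homog}). First I would invoke Theorem \ref{thm:Janyska} together with Corollary \ref{cor:metric_affine_L} to rewrite the Lagrangian density in the manifestly tensorial form \eqref{lambda_L_tensor}, namely $\mathcal{\hat{L}}(\ud{L}{\alpha}{\beta\gamma}, \lc{\nabla}_\mu\ud{L}{\alpha}{\beta\gamma}, \cdots, \lc{\nabla}_{\mu_1}\cdots\lc{\nabla}_{\mu_r}\ud{L}{\alpha}{\beta\gamma})$. The purpose of this step is that every argument of $\mathcal{\hat{L}}$ is now the component array of a genuine tensor field, so that upon fixing a point $x\in M$ and a chart around it, the restriction of $\mathcal{\hat{L}}$ becomes an ordinary smooth function $f$ on the Cartesian product of the finite-dimensional fibers $V_0\times\cdots\times V_r$, as in \eqref{eq:Latx}.

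Next I would extract the homogeneity hypothesis demanded by Theorem \ref{thm:homog} from general covariance. Since $\lambda$ must be invariant under all coordinate changes induced by local diffeomorphisms of $M$, it is in particular invariant under the one-parameter family of homotheties \eqref{homothety}, $x^\alpha = kx^{\alpha'}$ with $k>0$. I would then read off the scaling weight of each building block: the volume form gives $\diff x = k^4\diff x'$; the distortion, being of type $(1,2)$, transforms with one inverse and two direct Jacobian factors, so that $L'=kL$; and since each formal Levi-Civita covariant derivative appends one covariant index, the $i$-th derivative carries weight $k^{i+1}$, i.e. $\lc{\nabla}{}^i L' = k^{i+1}\lc{\nabla}{}^i L$. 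Substituting into the invariance requirement $\mathcal{\hat{L}}(L', \lc{\nabla}L',\cdots) = k^{4}\,\mathcal{\hat{L}}(L, \lc{\nabla}L,\cdots)$ produces exactly the scaling relation required by the theorem, with $b=4$ and $a_i=i+1$.

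With these weights in hand the Homogeneous Function Theorem applies verbatim: $f=\mathcal{\hat{L}}_x$ must be a finite sum of monomials homogeneous of degree $d_i$ in the $i$-th argument, with non-negative integers $d_i$ obeying $d_0 + 2d_1 + \cdots + (r+1)d_r = 4$. Because this is a point-independent algebraic constraint on the degrees which holds in every chart, I would then transfer the polynomiality and the degree relation from the fibers back to a global statement about $\mathcal{\hat{L}}$. The order bound $r\le 3$ follows immediately: since $a_i=i+1$ and all $d_i\ge 0$, any derivative of order $i>3$ would carry weight $a_i>4$, forcing $d_i=0$; hence no derivatives of order higher than three survive, and the relation collapses to \eqref{eq:degree}.

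I expect the genuine subtlety to lie in the first reduction rather than in the scaling bookkeeping: one must ensure that Janyska's theorem truly licenses writing the invariant Lagrangian purely through the tensorial objects $\lc{\nabla}{}^{(i)}\tens{L}$ (with the Christoffel contributions cancelling, as noted in Corollary \ref{cor:metric_affine_L}), so that the arguments of $\mathcal{\hat{L}}$ inhabit honest vector spaces and Theorem \ref{thm:homog} — a statement about functions on vector spaces, not on jet fibers — is applicable. Once that reduction is secured, the remainder is the essentially mechanical computation of tensor weights under homotheties followed by a direct invocation of the Homogeneous Function Theorem.
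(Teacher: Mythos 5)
Your proposal is correct and follows essentially the same route as the paper: reduce via Janyska's theorem to the tensorial form \eqref{lambda_L_tensor}, restrict to the fibers $V_0\times\cdots\times V_r$ at a point, extract the scaling law $k^4\mathcal{\hat L}(L,\lc{\nabla}L,\cdots)=\mathcal{\hat L}(kL,k^2\lc{\nabla}L,\cdots)$ from invariance under the homotheties \eqref{homothety}, and apply the Homogeneous Function Theorem with $b=4$, $a_i=i+1$. The tensor-weight bookkeeping ($L'=kL$, $\lc{\nabla}{}^iL'=k^{i+1}\lc{\nabla}{}^iL$, $\diff x=k^4\diff x'$) and the conclusion $r\le 3$ match the paper's argument exactly.
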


\bigskip

The next step is to realize that, under the hypothesis that $\lambda $ cannot depend on the metric $\metric$, it cannot depend on either the coefficients $\lc{\Gamma}\overset{}{_{\beta\gamma}^{\alpha }}$ or their derivatives;
that is, $\lc{\Gamma}\overset{}{_{\beta \gamma}^{\alpha}}$ and their derivatives must eventually cancel out in the expression of $\mathcal{\hat{L}}$, giving
\begin{equation}
\mathcal{\hat{L}}(\ud{L}{\alpha}{\beta\gamma},\lc{\nabla}_{\mu}\ud{L}{\alpha}{\beta\gamma}, \cdots, \lc{\nabla}_{\mu _{1}} \cdots \lc{\nabla}_{\mu _{_{r}}}\ud{L}{\alpha}{\beta\gamma}) = \mathcal{L}(\ud{L}{\alpha}{\beta\gamma}, \ud{L}{\alpha}{\beta\gamma, \mu}, \cdots, \ud{L}{\alpha}{\beta\gamma, \mu_{1} \cdots \mu _{r}}).
\end{equation}
This leaves us with $\mathcal{L}$ as a sum of homogeneous polynomials in $L$, $\partial L$, $\partial \partial L$ and $\partial \partial \partial L$, of the same degrees as in \eqref{eq:degree}. We thus obtain :
\begin{corollary}
\label{cor:polynomial}Any smooth, generally covariant Lagrangian $\lambda =\mathcal{\hat{L}}\diff x$ depending  only on the distortion tensor of an affine connection $\Gamma$ and its derivatives is at order at most three. Moreover, it must be a sum of homogeneous polynomials, of degrees $d_{0}$, $d_{1}$, $d_2$ and, respectively, $d_3$ in the variables $L$, $\partial L$, $\partial \partial L$ and, respectively, $\partial \partial \partial L$ satisfying
\begin{equation}
\label{eq:degrees}
d_{0} + 2 d_{1} + 3 d_{2} + 4 d_{3}=4.
\end{equation}
\end{corollary}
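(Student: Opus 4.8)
The plan is to take the preceding Proposition as the starting point: it already decomposes $\hat{\mathcal{L}}$ into a sum of homogeneous polynomials in the \emph{covariant} building blocks $L,\lc{\nabla}L,\lc{\nabla}\lc{\nabla}L,\lc{\nabla}\lc{\nabla}\lc{\nabla}L$, with weighted degrees satisfying $d_{0}+2d_{1}+3d_{2}+4d_{3}=4$ and order at most three. What remains is purely to re-express this statement in terms of \emph{partial} derivatives of $L$. First I would expand the formal covariant derivative,
\begin{equation*}
\lc{\nabla}_\mu \ud{L}{\alpha}{\beta\gamma} = \partial_\mu \ud{L}{\alpha}{\beta\gamma} + \ud{\lc{\Gamma}}{\alpha}{\mu\rho}\ud{L}{\rho}{\beta\gamma} - \ud{\lc{\Gamma}}{\rho}{\mu\beta}\ud{L}{\alpha}{\rho\gamma} - \ud{\lc{\Gamma}}{\rho}{\mu\gamma}\ud{L}{\alpha}{\beta\rho},
\end{equation*}
and its iterates, noting that each $\lc{\nabla}^{(k)}L$ equals the partial derivative of order $k$ plus a sum of products of (derivatives of) Christoffel symbols with partial derivatives of $L$ of order strictly less than $k$. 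Substituting these into the polynomial decomposition rewrites $\hat{\mathcal{L}}$ as a polynomial in $L,\partial L,\partial\partial L,\partial\partial\partial L$ whose coefficients are polynomials in $\lc{\Gamma}$ and its derivatives.

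Second, I would invoke the metric-independence hypothesis, exactly in the spirit of Corollary \ref{cor:metric_affine_L}. Since the Christoffel symbols and all their derivatives are determined by $\metric$, the requirement that $\lambda$ depend neither on $\metric$ nor on its derivatives forces every $\lc{\Gamma}$-dependent contribution to cancel identically. What survives is a genuine polynomial in $L,\partial L,\partial\partial L,\partial\partial\partial L$ with metric-independent (numerical) coefficients, which is the density $\mathcal{L}$ asserted in the statement.

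Third, I would establish that the weighted-degree relation and the order bound are inherited unchanged. The clean route is to re-run the homothety argument of the preceding Proposition directly on $\mathcal{L}$: at a fixed point the partial-derivative jet coordinates again form finite-dimensional vector spaces (the fibers of $J^{r}T_{2}^{1}M$, which is a vector bundle since $T_{2}^{1}M$ is one), and under $x^{\alpha}=k x^{\alpha^{\prime}}$ the blocks scale according to their lower-index content, $L^{\prime}=kL$, $(\partial L)^{\prime}=k^{2}\partial L$, and in general $(\partial^{(i)}L)^{\prime}=k^{i+1}\partial^{(i)}L$ — precisely the weights $a_{i}=i+1$ already used for the covariant blocks, since a partial and a covariant derivative carry the same number of lower indices. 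Feeding the Homogeneous Function Theorem (Theorem \ref{thm:homog}) with $b=4$, $a_{i}=i+1$, together with $\diff x = k^{4}\diff x^{\prime}$, then reproduces both $r\le 3$ and $d_{0}+2d_{1}+3d_{2}+4d_{3}=4$.

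The step demanding the most care is this last one. Because a $k$-th covariant derivative mixes a $k$-th partial derivative with strictly lower-order partial derivatives of $L$, one might worry that the Christoffel cancellation redistributes weight between blocks of different differentiation orders and thereby breaks the homogeneity. The reason this worry is unfounded — and the reason I would re-derive the degrees from scratch on the partial-derivative representation rather than transport them term by term through the substitution — is that the homothety weight of each building block is fixed by its index structure alone, which is identical for $\partial^{(i)}L$ and $\lc{\nabla}^{(i)}L$ (one checks in passing that $\lc{\Gamma}$ itself scales as $k$, so that $\lc{\Gamma}\,\partial^{(i-1)}L$ carries the same weight $k^{i+1}$ as $\partial^{(i)}L$). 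Hence the homogeneous-polynomial structure, and with it the weighted-degree relation, is automatically preserved.
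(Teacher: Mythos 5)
Your proposal is correct and follows essentially the same route as the paper: expand the formal covariant derivatives, use metric-independence to force all Christoffel contributions to cancel, and apply the Homogeneous Function Theorem with weights $a_{i}=i+1$ and $b=4$. The one place you go beyond the paper is your third step: the paper simply asserts that the degrees are ``the same'' after passing from $\lc{\nabla}{}^{(i)}L$ to $\partial^{(i)}L$, whereas you re-run the homothety argument directly on the partial-derivative jet coordinates of $J^{r}T_{2}^{1}M$ --- a welcome extra precaution, since, as you correctly observe, a covariant derivative mixes partial derivatives of different orders, and the argument goes through only because the scaling weight is fixed by the index count, which is the same for $\partial^{(i)}L$ and $\lc{\nabla}{}^{(i)}L$.
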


\textbf{Remark.} In the Homogeneous Function Theorem, the smoothness assumption on $f$ on an entire Cartesian product of vector spaces $V_{0}\times \cdots\times V_{n}$ (in particular, smoothness at its zero vector) is essential. As a consequence, the result \textit{cannot} be applied to find Lagrangians depending on a metric tensor, since the condition $\det g \not= 0$ forbids $\left( g_{\alpha \beta }\right) $ from being all zero (otherwise stated, we cannot pick any of the fibers of $\met\left(M\right) $ as $V_{0},$ since these fibers are not vector spaces). This allows for non-polynomial natural Lagrangian forms in the metric, such as $\sqrt{\left\vert \det g\right\vert }\diff x.$

Yet, in our case, our configuration space $T_{2}^{1}M $ is a vector bundle, therefore the Homogeneous Function Theorem can be safely applied, ensuring that $\lambda $ must be polynomial, as in \eqref{eq:degrees}. Actually, as we will see in the next subsection, relation \eqref{eq:degrees} can be further simplified.

\subsection{Classification of pure distortion Lagrangians}\label{sec:classif_affine}

To find all generally covariant (natural), pure distortion Lagrangians $\lambda \in \Omega_{4,M}\left( J^{r}T_{2}^{1}M \right) ,$ we will use Corollary \ref{cor:polynomial} above, together with several known results in the literature, which we briefly review below :

\begin{enumerate}
\item Any natural Lagrangian (which is an equivariant mapping, under the action of the differential group), must be obtained as the result of a  \textit{natural operator}, \cite{Janyska}.

\item\label{rmk:dh} The only possible first order natural operators acting on a differential $p$-form on a manifold and returning a $(p+1)$-form, are constant multiples of the exterior derivative $\diff$, \cite{Janyska}, \cite{Kolar}. Moreover, natural (generally covariant) differential operators on arbitrary tensor bundles are (see, \emph{e.g.}, \cite{Slovak-thesis}, p4) compositions of exterior differentiation and invariant (natural) algebraic operators. Actually, when producing Lagrangians, which are \textit{horizontal} differential forms, the appropriate operator is the total (or horizontal) exterior derivative $\diff_H = h \circ \diff$. In particular, since $\diff_H\circ\diff_H \equiv 0$, there are no natural operators of order $r > 1$ on tensor bundles, involving the tensor variables and their derivatives alone.

\item Natural algebraic operators on tensor bundles are,  \cite{Slovak-thesis}, only finite iterations of : permutations of indices, tensor product with invariant tensors, trace with respect to one subscript and one superscript and linear combinations of these.
\end{enumerate}

\bigskip

Using the above mentioned results, together with Corollary \ref{cor:polynomial}, we find
\begin{theorem} \label{lem:degrees_lambda}
Assume $\dim(M)=4$. Then, all natural metric-affine Lagrangians depending on the distortion of the connection alone are of order at most one and must be expressed as a sum whose terms fall into one of the following classes:

\begin{enumerate}
\item Purely algebraic terms: These must be expressed as homogeneous polynomials of degree 4 in the components $\ud{L}{\alpha}{\beta \gamma}$ of the distortion tensor.

\item First order terms: These must be either quadratic in $L$ and linear in $\partial L$, or quadratic in $\partial L$ (and independent of $L$). In any of the two cases, these must be obtained via horizontal exterior differentiation and/or wedge product, from differential forms of rank at most three, depending algebraically on $L$.
\end{enumerate}
\end{theorem}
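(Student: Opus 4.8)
The plan is to combine the polynomial constraint of Corollary \ref{cor:polynomial} with the structural results on natural operators (items 1--3 preceding the theorem) to pin down both the order and the admissible term types. The starting point is relation \eqref{eq:degrees}, $d_{0}+2d_{1}+3d_{2}+4d_{3}=4$, which enumerates the \emph{a priori} possible homogeneous types. First I would argue that the order cannot exceed one. The key observation (item \ref{rmk:dh}) is that $\diff_{H}\circ\diff_{H}\equiv 0$, so on a tensor bundle every natural operator producing a horizontal form is a composition of a single horizontal exterior differentiation with natural \emph{algebraic} operators (permutations of indices, tensor product with invariant tensors, traces, and linear combinations thereof, item 3). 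Since each application of $\diff_{H}$ raises the derivative order by exactly one and no genuine second-order natural operator survives, any term in $\mathcal{L}$ can contain derivatives of $L$ of order at most one. Concretely, I would rule out $d_{2}>0$ and $d_{3}>0$: a term genuinely depending on $\partial\partial L$ (or $\partial\partial\partial L$) would require an order-$\ge 2$ natural operator, which does not exist on tensor bundles by item \ref{rmk:dh}. This forces $d_{2}=d_{3}=0$, reducing \eqref{eq:degrees} to $d_{0}+2d_{1}=4$.

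Next I would solve $d_{0}+2d_{1}=4$ over the non-negative integers. The solutions are $(d_{0},d_{1})\in\{(4,0),(2,1),(0,2)\}$. The case $(4,0)$ gives the purely algebraic terms: homogeneous polynomials of degree $4$ in $\ud{L}{\alpha}{\beta\gamma}$, which is exactly class 1 of the statement. The cases $(2,1)$ and $(0,2)$ give the two first-order possibilities in class 2, namely quadratic in $L$ and linear in $\partial L$, or quadratic in $\partial L$ and independent of $L$. For these first-order terms I would then invoke items 1--3 to constrain \emph{how} the single derivative enters: since the only admissible differential operator is $\diff_{H}$ (a constant multiple of the exterior derivative, horizontalized) and all remaining structure is algebraic, each such term must arise by horizontal exterior differentiation and/or wedge product applied to differential forms whose coefficients depend \emph{algebraically} (polynomially) on $L$. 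Counting wedge degrees in dimension $4$, a top-form ($m=4$) built with one $\diff_{H}$ must be assembled from factor forms of rank at most three, giving the bound stated in class 2.

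The main obstacle, and the step requiring the most care, is the passage from the abstract statement ``natural operators are compositions of $\diff_{H}$ and algebraic operators'' to the concrete claim that every first-order term is literally obtained by $\diff_{H}$ and $\wedge$ from algebraically-$L$-dependent forms of rank $\le 3$. One must check that a term of type $(2,1)$ or $(0,2)$, which \emph{a priori} is only a natural polynomial expression in $L$ and $\partial_{\mu}\ud{L}{\alpha}{\beta\gamma}$, can indeed be rewritten in this operator form; here the subtlety is that $\partial_{\mu}\ud{L}{\alpha}{\beta\gamma}$ is not tensorial on its own, whereas $\diff_{H}$ applied to a genuine form is. The reconciling idea is that, by Theorem \ref{thm:Janyska} and Corollary \ref{cor:metric_affine_L}, the Christoffel symbols must cancel, so the surviving combinations of $\partial L$ are precisely those assembling into $\diff_{H}$ of algebraic forms in $L$; equivalently, the only tensorial (and hence natural) first-order combination available is the antisymmetrized derivative furnished by $\diff_{H}$. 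I would therefore close the argument by showing that, once the non-tensorial pieces are forced to cancel, the residual first-order dependence is exhausted by $\diff_{H}$ of degree-$\le 3$ forms with polynomial-in-$L$ coefficients, wedged to total degree $4$. The algebraic bookkeeping (enumerating index contractions and wedge patterns) is routine and is deferred to the explicit classification of Appendix \ref{appx:classif}.
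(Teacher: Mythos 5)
Your proposal follows the paper's own proof essentially step for step: you invoke Corollary \ref{cor:polynomial} to get the degree relation $d_0+2d_1+3d_2+4d_3=4$, use the non-existence of order-$\ge 2$ natural operators on tensor bundles (the $\diff_H\circ\diff_H\equiv 0$ result) to force $d_2=d_3=0$, enumerate the solutions $(4,0),(2,1),(0,2)$ of $d_0+2d_1=4$, and derive the $\diff_H$/wedge-product structure of the first-order terms from the same catalogue of natural operators. The argument is correct and matches the paper's route; your extra remark on the non-tensoriality of $\partial L$ and its resolution via the cancellation of Christoffel symbols is a welcome clarification but not a departure.
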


\begin{proof}
Assume $\lambda =\mathcal{L}\diff x$ is a natural Lagrangian of order $r$ on $T_{2}^{1}M .$ According to Corollary \ref{cor:polynomial}, we must have $r\leq 3$ and the density $\mathcal{L}=\mathcal{L}\left( L,\partial L,\partial \partial L,\partial \partial \partial L\right) $ must be polynomial in all its variables, with the respective degrees $d_{0}$, $d_{1}$, $d_{2}$, $d_{3}$ satisfying \eqref{eq:degrees}. The latter relation can obviously hold only for $d_{2},d_{3}\leq 1$ (where only one of the values $d_{2}$, $d_{3}$ can be non-zero) -- \emph{i.e.}, $\mathcal{L}$ is at most linear in one of the variables $\partial \partial L$ or $\partial \partial \partial L$; but, Result \ref{rmk:dh} above actually forbids this situation, meaning that we must actually have
\begin{equation}
d_{2} = d_{3} = 0.
\end{equation}%
In other words, $\lambda $ is of order at most one. Moreover, equation \eqref{eq:degrees} becomes :
\begin{equation}
d_{0} + 2 d_{1} = 4.
\end{equation}%
This leaves room for three possibilities :
\begin{enumerate}
\item $d_{0}=4, d_{1}=0$,
\item $d_{0} = 2, d_{1} = 1$,
\item $d_{0} = 0, d_{1} = 2$,
\end{enumerate}
corresponding to the two situations in the statement of the theorem.

The statement on the exterior derivative/wedge product structure for the latter two cases follows again from the results recalled in the beginning of this subsection.
\end{proof}

\bigskip

A complete list of the possibilities of building such Lagrangians is given in Appendix \ref{appx:classif}.

\section{Conclusion} \label{sec:Conclusion}

In the present paper, we have shown that, in any physical theory involving more than one dynamical variable, having an educated guess (which can be based, \emph{e.g.}, on some physical principle)  on the field equations with respect to one such variable, say $y^{A}$, one can find, in a systematic way, the closest variational equations $\dfrac{\delta \mathcal{L}}{\delta y^{A}} = 0$ to our original guess. Using this procedure, all the terms in the Lagrangian density $\mathcal{L}$ that involve, in a way or another, $y^{A}$ or their derivatives can be recovered. The obtained Lagrangian density then also determines the Euler-Lagrange equations for the other dynamical variables, say $z^I$, up to terms that are completely independent from $y^A$ and their derivatives (and hence do not contribute to the $y^A$ equation).

As a first application, we studied the case of metric-affine theories. Assuming that one has a guess on the \textit{metric} equations, one can recover the closest equations to these that are Euler-Lagrange equations for some Lagrangian. In that case, we also classified the terms in the Lagrangian that cannot be recovered by our technique (\emph{i.e.} terms that are completely independent of the metric tensor). We found that these are given by polynomial expressions (of degree at most 4) in the distortion of the connection and its first order derivatives.

\bigskip

In a forthcoming paper, we will apply this technique to determine the metric-affine models of gravity that give the closest metric equations to the $\Lambda$CDM model of cosmology.  This will then further allow us to constrain the evolution equations for the connection (\emph{i.e.}, for the distortion tensor).

\bigskip

Of course, the procedure outlined in this paper is fully general and can thus be applied to a variety of other contexts, \emph{e.g.}, to scalar-tensor theories, or scalar-vector-tensor theories.
To further extend our procedure it might also be interesting to study in detail cases where the convergence of the Vainberg-Tonti Lagrangian, as defined in this paper, cannot be guaranteed. We leave these and other questions for future works.

\bigskip

\textbf{Acknowledgments. }The authors are grateful to M. Hohmann for insightful discussions and for first raising the question of whether variational completion still makes sense when knowing just a part of the equations, and to C. Pfeifer, K. Hajian and colleagues from the London-Oldenburg online seminar for useful comments on metric-independent Lagrangians. L.D. was supported by a grant from the Transilvania Fellowship Program for Postdoctoral Research/Young Researchers (September 2022) and from a Postdoctoral sojourn grant from \textsc{Complexys} institute from University of Mons (Belgium). This work is based upon colaboration within the COST Actions : CA18108, ``Quantum Gravity Phenomenology in the Multi-Messenger Approach'' (QGMM) and CA 21136, ``Addressing observational tensions in cosmology with systematics and fundamental physics'' (CosmoVerse).

\appendix
\section{Helmholtz expressions and their relation to Vainberg-Tonti Lagrangian} \label{appx:Helmholtz}

In this appendix, we present the explicit form of the Helmholtz expressions, encoding obstruction to variationality, and detail the computations underlying the first statment of Theorem \ref{thm:partial_var_compl}. For simplicity, we here display everything in the case of a second order source form. Reasonings unfold similarly in the general case.

\subsection{Helmholtz expressions}

For a second order source form $\mathcal{E} = \mathcal{E}_{A} \diff y^{A} \wedge \diff x$, the local variationality conditions (Helmholtz conditions) read:
\begin{eqnarray}
H_{AB}^{~\mu \nu }:=\dfrac{\partial \mathcal{E}_{A}}{\partial y_{~\mu \nu}^{B}}-\dfrac{\partial \mathcal{E}_{B}}{\partial y_{~\mu \nu}^{A}}=0, \label{H1} \\
H_{AB}^{\mu}:=\dfrac{\partial \mathcal{E}_{A}}{\partial y_{~\mu }^{B}}+ \dfrac{\partial \mathcal{E}_{B}}{\partial y_{~\mu }^{A}}-2d_{\nu }\dfrac{\partial \mathcal{E}_{B}}{\partial y_{~\mu \nu }^{A}}=0,  \label{H2} \\
H_{AB}:=\dfrac{\partial \mathcal{E}_{A}}{\partial y^{B}}-\dfrac{\partial 
	\mathcal{E}_{B}}{\partial y^{A}}+d_{\mu }\dfrac{\partial \mathcal{E}_{B}}{%
	\partial y_{~\mu }^{A}}-d_{\mu }d_{\nu }\dfrac{\partial \mathcal{E}_{B}}{%
	\partial y_{~\mu \nu }^{A}} =0.  \label{H3}
\end{eqnarray}

\subsection{Complement to the proof of Theorem \ref{thm:partial_var_compl}}

Let us now discuss the computations underlying the first statment of Theorem \ref{thm:partial_var_compl}. To that purpose, we assume the same setup as in the statment of the theorem; yet, here -- for conciseness of the expressions -- we restrict our presentation to the case of a second order source form :

Assume that $\mathcal{E}_{(1)}  = \mathcal{E}_{A} \diff y^{A} \wedge \diff x$ is locally variational with respect to the variables $y^A$, that is, the corresponding Hemholtz expressions vanish. Denoting by $\lambda_{1} = \mathcal{L}_1 \diff x$ the Vainberg-Tonti Lagrangian of $\mathcal{E}_{(1)}$ with respect to the variables $y^A$ \eqref{def_lambda1}, we get (for $r=2$; the proof for general $r$ is completely similar): 
\begin{equation*}
\dfrac{\partial \mathcal{L}_{1}}{\partial y^{B}}=\int_0^1 \mathcal{E}_{B}\circ \chi _{t,1}\diff t+y^{A}\int_0^1 t\dfrac{\partial\mathcal{E}_{A}}{\partial y^{B}}\circ \chi _{t,1}\diff t,
\end{equation*}%
which, after integration by parts in the first term, leads to:
\begin{align*}
\dfrac{\partial \mathcal{L}_{1}}{\partial y^{B}}& =t\mathcal{E}_{B}(x^{\mu},ty^{A},ty_{~\mu }^{A},ty_{~\mu \nu }^{A},z^{I},z_{\ \mu }^{I},z_{~\mu \nu}^{I})\mid _{0^{{}}}^{1_{{}}} \\ & + y^{A} \int_0^1 t(\dfrac{\partial \mathcal{E}_{A}}{\partial y^{B}}-\dfrac{\partial \mathcal{E}_{B}}{\partial y^{A}})\circ\chi _{t,1} \diff t -  y_{~\mu }^{A} \int_0^1 t \dfrac{\partial \mathcal{E}_{B}}{\partial y_{\mu~}^{A}}\circ\chi _{t,1} \diff t - y_{~\mu \nu }^{A} \int_0^1 t \dfrac{\partial \mathcal{E}_{B}}{\partial y_{\mu \nu ~}^{A}}\circ\chi _{t,1}\diff t.
\end{align*}%

Similar computations yield:
\begin{align*}
  \diff_{\mu}\left(\dfrac{\partial \mathcal{L}_{1}}{\partial y_{~\mu }^{B}}\right) & =y_{~\mu }^{A}\int_0^1 t\dfrac{\partial \mathcal{E}_{A}}{\partial y_{\mu }^{B}}\circ \chi_{t,1}\diff t + y^{A}\int_0^1 t \left[\diff_{\mu}\left(\dfrac{\partial \mathcal{E}_{A}}{\partial y_{\mu }^{B}}\right)\right]\circ \chi _{t,1}\diff t; \\
  \diff_{\mu}\diff_{\nu}\left(\dfrac{\partial \mathcal{L}_{1}}{\partial y_{~\mu \nu }^{B}}\right)& =y_{~\mu \nu }^{A}\int_0^1 t \dfrac{\partial \mathcal{E}_{A}}{\partial y_{\mu \nu}^{B}}\circ \chi _{t,1}\diff t + 2 y_{~\mu }^{A} \int_0^1 t\left[\diff_{\nu}\left(\dfrac{\partial \mathcal{E}_{A}}{\partial y_{\mu \nu }^{B}}\right)\right]\circ \chi _{t,1}\diff t \\ &+ y^{A} \int_0^1 t \left[\diff_{\mu}\diff_{\nu}\left(\dfrac{\partial \mathcal{E}_{A}}{\partial y_{\mu \nu }^{B}}\right)\right]\circ\chi _{t,1}\diff t.
\end{align*}%
Summing up, we immediately find : 
\begin{align*}
  \dfrac{\delta \mathcal{L}_{1}}{\delta y^{B}} & \coloneqq \dfrac{\partial \mathcal{L}_{1}}{\partial y^{B}} - \diff_{\mu}\left(\dfrac{\partial \mathcal{L}_{1}}{\partial y_{~\mu }^{B}}\right) + \diff_{\mu}\diff_{\nu}\left(\dfrac{\partial \mathcal{L}_{1}}{\partial y_{~\mu \nu }^{B}}\right)\\
  &= \mathcal{E}_{B}-\int_0^1 t[y^{A}(H_{BA}\circ \chi _{t,1})+y_{~\mu}^{A}(H_{BA}^{\mu }\circ \chi _{t,1})+y_{~\mu \nu }^{A}(H_{BA}^{\mu\nu}\circ\chi _{t,1})]\diff t,\\
  &= \mathcal{E}_{B}-\int_0^1 [y^{A}H_{BA}+y_{~\mu}^{A}H_{BA}^{\mu } + y_{~\mu \nu }^{A}H_{BA}^{\mu\nu}]\circ \chi _{t,1} \diff t.
\end{align*}%
where $H_{AB}^{\mu \nu },~H_{~AB}^{\nu },$ $H_{AB}$ are the Helmholtz expressions (\ref{H1})-(\ref{H3}).

Since, by hypothesis, $\mathcal{E}_{\left( 1\right) }$ is locally variational with respect to the variables $y^A$, we get that $H_{AB}=0, H_{AB}^{\mu }=0, H_{AB}^{\mu\nu}=0$, which then implies :
\begin{equation*}
\dfrac{\delta \mathcal{L}_{1}}{\delta y^{B}}=\mathcal{E}_{B},
\end{equation*}%
that is, $\lambda_{1}$ is a Lagrangian for $\mathcal{E}_{\left( 1\right) },$ as claimed.

\section{Classification of pure distortion Lagrangians} \label{appx:classif}

Assume $\dim M=4.$ According to Theorem \ref{lem:degrees_lambda}, natural Lagrangians $\lambda\in \Omega_{4,M}\left(J^{r}T_{2}^{1}M\right) $ to be constructed from the distortion tensor $\tens{L}$ only (with no contribution of the metric) are differential 4-forms that are polynomial and of order at most one in $\tens{L}$. More precisely, these must belong to one of the following classes :

\begin{enumerate}
\item Purely algebraic in $L$ $(d_{0}=4,d_{1}=0)$: 
\begin{eqnarray}
\lambda_{L_{\cdot}L_{\cdot}L_{\cdot}L_{\cdot}} = L_{~\mu_{\sigma \left(1\right)}\tau_{1}}^{\mu_{1}}L_{~\mu_{\sigma \left(2\right)}\tau_{2}}^{\mu_{2}}L_{~\mu_{\sigma \left(3\right)}\tau_{3}}^{\mu_{3}}L_{~\mu_{\sigma \left(4\right)}\tau_{4}}^{\mu_{4}}\diff x^{\tau_{1}}\wedge \diff x^{\tau_{2}}\wedge \diff x^{\tau_{3}}\wedge \diff x^{\tau_{4}};
\label{lambda_1} \\
\lambda_{L_{\cdot\cdot}LL_{\cdot}L_{\cdot}} = L_{~\tau_{1}\tau_{2}}^{\mu_{1}}L_{~\mu_{\sigma \left(1\right)}\mu_{\sigma \left(2\right)}}^{\mu_{2}}L_{~\mu_{\sigma \left(3\right)}\tau_{3}}^{\mu_{3}}L_{~\mu_{\sigma \left(4\right)}\tau_{4}}^{\mu_{4}}\diff x^{\tau_{1}}\wedge \diff x^{\tau_{2}}\wedge \diff x^{\tau_{3}}\wedge \diff x^{\tau_{4}};
\label{lambda_2} \\
\lambda_{L_{\cdot\cdot}L_{\cdot\cdot}LL} = L_{~\tau_{1}\tau_{2}}^{\mu_{1}}L_{~\tau_{3}\tau_{4}}^{\mu_{2}}L_{~\mu_{\sigma \left(1\right)}\mu_{\sigma \left(2\right)}}^{\mu_{3}}L_{~\mu_{\sigma \left(3\right)}\mu_{\sigma \left(4\right)}}^{\mu_{4}}\diff x^{\tau_{1}}\wedge \diff x^{\tau_{2}}\wedge \diff x^{\tau_{3}}\wedge \diff x^{\tau_{4}},  \label{lambda_3}
\end{eqnarray}%
and all possibilities obtained by interchanging indices $\mu_{\sigma \left(i\right)}$ with $\tau_{i}$ in \eqref{lambda_1}--\eqref{lambda_3}. In the above expressions, $\sigma \in S_{4}$ is an arbitrary permutation and the number of dots next to each $L$ in the denomination of $\lambda$ indicates the number of subscripts $\tau_{i}$ of the respective factor that are summed with one of the differentials $\diff x^{\tau_{i}}$.

\item First order in $L$ : These can be quadratic in $\partial L$ ($d_{0}=0,d_{1}=2$), or quadratic in $L$ and linear in $\partial L$ ($d_{0}=2,d_{1}=1$). In both cases, these must be obtained by (horizontal) exterior differentiation $\diff_H = h \circ \diff$ and wedge products from lower rank differential forms which are purely algebraic in $L$.
\end{enumerate}

In order to more easily keep track of all the independent terms, instead of interchanging $\mu_{\sigma \left(i\right)}$ and $\tau_{i}$ in \eqref{lambda_1}--\eqref{lambda_3}, we will split the distortion tensor $\tens{L}$ into its symmetric $\tens{Q}$ and antisymmetric $\tens{T}$ parts\footnote{Pay attention that, here, since $\tens{Q}$ (resp. $\tens{T}$) denotes the symmetric (resp. antisymmetric) part of the distortion tensor $\tens{L}$, it should not be mistaken with the non-metricity (resp. torsion) tensor.} :
\begin{equation}
\ud{L}{\tau}{\beta\gamma} = \ud{Q}{\tau}{(\beta\gamma)} + \ud{T}{\tau}{[\beta\gamma]}.
\end{equation}%
All Lagrangians (\ref{lambda_1})--(\ref{lambda_3}), can be eventually expressed as sums of polynomials in $\tens{Q}$ and $\tens{T}$ -- where some of these polynomials (\emph{e.g.}, all $Q_{\cdot\cdot}$-terms) will vanish due to the antisymmetry of the wedge product.

\subsection{Lower rank, algebraic invariants}\label{sec:lowrankalg}

A preliminary step is to classify lower rank invariant $k$-forms $\rho =\rho_{\tau_{1}\cdots\tau_{k}}(T,Q)\diff x^{\tau_{1}}\wedge \cdots\wedge \diff x^{\tau_{k}}$ ($k\in \left\{ 1,2,3\right\} $) which are algebraic in $\ud{T}{\tau}{\beta\gamma}$ and $\ud{Q}{\tau}{\beta\gamma}$. These are, as mentioned above, polynomial in $\tens{T}$ and $\tens{Q}$; moreover, invariance to arbitrary natural coordinate changes implies that the total degree of the polynomial must be precisely $k.$

\begin{enumerate}
\item[a.] \textbf{1-forms.} For $k=1,$ we get two independent forms : 
\begin{equation}
\tensi{\alpha}_{T}=T_{~\mu \tau}^{\mu}\diff x^{\tau},~\ \tensi{\alpha}_{Q}=Q_{~\mu \tau}^{\mu}\diff x^{\tau}
\end{equation}%
(here we haven't used any dots next to $T$ or $Q$ in the left hand sides as there is no risk of confusion).

\item[b.] \textbf{2-forms.} We easily find four independent such forms :
\begin{equation}
\begin{array}{cc}
\tensi{\alpha}_{T}\wedge \tensi{\alpha}_{Q}, & \tensi{\beta}_{Q_{\cdot}T_{\cdot}}=Q_{~\mu_{2}\tau_{1}}^{\mu_{1}}T_{~\mu_{1}\tau_{2}}^{\mu_{2}}\diff x^{\tau_{1}}\wedge \diff x^{\tau_{2}}, \\ 
\tensi{\beta}_{TT_{\cdot\cdot}}=T_{~\mu_{1}\mu_{2}}^{\mu_{1}}T_{~\tau_{1}\tau_{2}}^{\mu_{2}}\diff x^{\tau_{1}}\wedge \diff x^{\tau_{2}}, & \tensi{\beta}_{QT_{\cdot\cdot}}=Q_{~\mu_{1}\mu_{2}}^{\mu_{1}}T_{~\tau_{1}\tau_{2}}^{\mu_{2}}\diff x^{\tau_{1}}\wedge \diff x^{\tau_{2}}.%
\end{array}
\label{2-forms}
\end{equation}

\item[c.] \textbf{Forms of rank 3.} These must belong to one of the following classes :
\begin{eqnarray}
\tensi{\gamma}_{A_{\cdot}B_{\cdot}C_{\cdot}} = A_{~\mu_{\sigma\left(1\right)}\tau_{1}}^{\mu_{1}}B_{~\mu_{\sigma \left(2\right)}\tau_{2}}^{\mu_{2}}C_{~\mu_{\sigma \left(3\right)}\tau_{3}}^{\mu_{3}}\diff x^{\tau_{1}}\wedge \diff x^{\tau_{2}}\wedge \diff x^{\tau_{3}}, \\
\tensi{\gamma}_{A_{\cdot\cdot}BC_{\cdot}} = A_{~\tau_{1}\tau_{2}}^{\mu_{1}}B_{~\mu_{\sigma \left(1\right)}\mu_{\sigma \left(2\right)}}^{\mu_{2}}C_{~\mu_{\sigma \left(3\right)}\tau_{3}}^{\mu_{3}}\diff x^{\tau_{1}}\wedge \diff x^{\tau_{2}}\wedge \diff x^{\tau_{3}},
\end{eqnarray}%
where $A,B,C\in \left\{ Q,T\right\} $ and $\sigma \in S_{3}.$ By direct inspection, we find 18 independent invariants:

$\checkmark$ 6 wedge products :
\begin{equation}
\tensi{\alpha}_{T}\wedge \tensi{\beta}_{Q_{\cdot}T_{\cdot}},
\tensi{\alpha}_{T}\wedge \tensi{\beta}_{TT_{\cdot\cdot}},
\tensi{\alpha}_{T}\wedge \tensi{\beta}_{QT_{\cdot\cdot}},
\tensi{\alpha}_{Q}\wedge \tensi{\beta}_{Q_{\cdot}T_{\cdot}},
\tensi{\alpha}_{Q}\wedge \tensi{\beta}_{TT_{\cdot\cdot}},
\tensi{\alpha}_{Q}\wedge \tensi{\beta}_{QT_{\cdot\cdot}};
\end{equation}

$\checkmark$ 4 indecomposable 3-forms $\tensi{\gamma}_{A_{\cdot}B_{\cdot}C_{\cdot}}$ :
\begin{eqnarray}
\tensi{\gamma}_{T_{\cdot}T_{\cdot}T_{\cdot}} = T_{~\mu_{2}\tau_{1}}^{\mu_{1}}T_{~\mu_{3}\tau_{2}}^{\mu_{2}}T_{~\mu_{1}\tau_{3}}^{\mu_{3}}\diff x^{\tau_{1}}\wedge \diff x^{\tau_{2}}\wedge \diff x^{\tau_{3}}, \\
\tensi{\gamma}_{T_{\cdot}T_{\cdot}Q_{\cdot}} = T_{~\mu_{2}\tau_{1}}^{\mu_{1}}T_{~\mu_{3}\tau_{2}}^{\mu_{2}}Q_{~\mu_{1}\tau_{3}}^{\mu_{3}}\diff x^{\tau_{1}}\wedge \diff x^{\tau_{2}}\wedge \diff x^{\tau_{3}}, \\
\tensi{\gamma}_{T_{\cdot}Q_{\cdot}Q_{\cdot}} = T_{~\mu_{2}\tau_{1}}^{\mu_{1}}Q_{~\mu_{3}\tau_{2}}^{\mu_{2}}Q_{~\mu_{1}\tau_{3}}^{\mu_{3}}\diff x^{\tau_{1}}\wedge \diff x^{\tau_{2}}\wedge \diff x^{\tau_{3}}, \\
\tensi{\gamma}_{Q_{\cdot}Q_{\cdot}Q_{\cdot}} = Q_{~\mu_{2}\tau_{1}}^{\mu_{1}}Q_{~\mu_{3}\tau_{2}}^{\mu_{2}}Q_{~\mu_{1}\tau_{3}}^{\mu_{3}}\diff x^{\tau_{1}}\wedge \diff x^{\tau_{2}}\wedge \diff x^{\tau_{3}},
\end{eqnarray}

$\checkmark$ 8 indecomposable 3-forms $\tensi{\gamma}_{A_{\cdot \cdot}BC_{\cdot}}:$ 
\begin{eqnarray}
\tensi{\gamma}_{T_{\cdot\cdot}TT_{\cdot}} = T_{~\tau_{1}\tau_{2}}^{\mu_{1}}T_{~\mu_{2}\mu_{3}}^{\mu_{2}}T_{~\mu_{1}\tau_{3}}^{\mu_{3}}\diff x^{\tau_{1}}\wedge \diff x^{\tau_{2}}\wedge \diff x^{\tau_{3}},\\
\tensi{\gamma}_{T_{\cdot\cdot}QT_{\cdot}} = T_{~\tau_{1}\tau_{2}}^{\mu_{1}}Q_{~\mu_{2}\mu_{3}}^{\mu_{2}}T_{~\mu_{1}\tau_{3}}^{\mu_{3}}\diff x^{\tau_{1}}\wedge \diff x^{\tau_{2}}\wedge \diff x^{\tau_{3}},\\
\tensi{\gamma}_{T_{\cdot\cdot}TQ_{\cdot}} = T_{~\tau_{1}\tau_{2}}^{\mu_{1}}T_{~\mu_{2}\mu_{3}}^{\mu_{2}}Q_{~\mu_{1}\tau_{3}}^{\mu_{3}}\diff x^{\tau_{1}}\wedge \diff x^{\tau_{2}}\wedge \diff x^{\tau_{3}},\\
\tensi{\gamma}_{T_{\cdot\cdot}QQ_{\cdot}} = T_{~\tau_{1}\tau_{2}}^{\mu_{1}}Q_{~\mu_{2}\mu_{3}}^{\mu_{2}}Q_{~\mu_{1}\tau_{3}}^{\mu_{3}}\diff x^{\tau_{1}}\wedge \diff x^{\tau_{2}}\wedge \diff x^{\tau_{3}},
\end{eqnarray}
\begin{eqnarray}
\tensi{\tilde{\gamma}}_{T_{\cdot\cdot}TT_{\cdot}}=T_{~\tau_{1}\tau_{2}}^{\mu_{1}}~T_{~\mu_{1}\mu_{3}}^{\mu_{2}}~T_{~\mu_{2}\tau_{3}}^{\mu_{3}}\diff x^{\tau_{1}}\wedge \diff x^{\tau_{2}}\wedge \diff x^{\tau_{3}},\\
\tensi{\tilde{\gamma}}_{T_{\cdot\cdot}QT_{\cdot}}=T_{~\tau_{1}\tau_{2}}^{\mu_{1}}~Q_{~\mu_{1}\mu_{3}}^{\mu_{2}}~T_{~\mu_{2}\tau_{3}}^{\mu_{3}}\diff x^{\tau_{1}}\wedge \diff x^{\tau_{2}}\wedge \diff x^{\tau_{3}}, \\
\tensi{\tilde{\gamma}}_{T_{\cdot\cdot}TQ_{\cdot}}=T_{~\tau_{1}\tau_{2}}^{\mu_{1}}T_{~\mu_{1}\mu_{3}}^{\mu_{2}}Q_{~\mu_{2}\tau_{3}}^{\mu_{3}}\diff x^{\tau_{1}}\wedge \diff x^{\tau_{2}}\wedge \diff x^{\tau_{3}}, \\
\tensi{\tilde{\gamma}}_{T_{\cdot\cdot}QQ_{\cdot}}=T_{~\tau_{1}\tau_{2}}^{\mu_{1}}Q_{~\mu_{1}\mu_{3}}^{\mu_{2}}Q_{~\mu_{2}\tau_{3}}^{\mu_{3}}\diff x^{\tau_{1}}\wedge \diff x^{\tau_{2}}\wedge \diff x^{\tau_{3}}.
\end{eqnarray}
\end{enumerate}

\bigskip

\subsection{Zeroth order (algebraic) Lagrangians in $\tens{L}$}

These must be expressed as homogeneous polynomials of total degree 4 in $\ud{T}{\tau}{\beta \gamma}$ and $\ud{Q}{\tau}{\beta \gamma}.$ There are 65 such independent invariants, as follows :

$\checkmark$ 33 wedge products of lower rank differential forms :
\begin{eqnarray}
&&\tensi{\alpha}_{T}\wedge \tensi{\gamma}_{T_{\cdot}T_{\cdot}T_{\cdot}},\cdots,~\tensi{\alpha}_{T}\wedge \tensi{\tilde{\gamma}}_{T_{\cdot\cdot}QQ_{\cdot}},~\tensi{\alpha}_{Q}\wedge\tensi{\gamma}_{T_{\cdot}T_{\cdot}T_{\cdot}},\cdots,~\tensi{\alpha}_{T}\wedge\tensi{\tilde{\gamma}}_{T_{\cdot\cdot}QQ_{\cdot}},\\
&&\tensi{\alpha}_{T}\wedge \tensi{\alpha}_{Q}\wedge \tensi{\beta}_{Q_{\cdot}T_{\cdot}},~\ \tensi{\alpha}_{T}\wedge \tensi{\alpha}_{Q}\wedge\tensi{\beta}_{TT_{\cdot\cdot}},~\ \tensi{\alpha}_{T}\wedge \tensi{\alpha}_{Q}\wedge \tensi{\beta}_{QT_{\cdot\cdot}}, \\
&&\tensi{\beta}_{Q_{\cdot}T_{\cdot}}\wedge \tensi{\beta}_{Q_{\cdot}T_{\cdot}},~\ \tensi{\beta}_{Q_{\cdot}T_{\cdot}}\wedge\tensi{\beta}_{TT_{\cdot\cdot}},\ \tensi{\beta}_{Q_{\cdot}T_{\cdot}}\wedge \tensi{\beta}_{QT_{\cdot\cdot}},\\
&&\tensi{\beta}_{TT_{\cdot\cdot}}\wedge \tensi{\beta}_{TT_{\cdot\cdot}},\ \tensi{\beta}_{TT_{\cdot\cdot}}\wedge \tensi{\beta}_{QT_{\cdot\cdot}},~\tensi{\beta}_{QT_{\cdot\cdot}}\wedge \tensi{\beta}_{QT_{\cdot\cdot}};
\end{eqnarray}%
and 32 indecomposable 4-forms, grouped as follows :

$\checkmark$ 5 independent forms of type%
\begin{equation}
\tensi{\lambda}_{T_{\cdot\cdot}T_{\cdot\cdot}AB}=T_{~\tau_{1}\tau_{2}}^{\mu_{1}}T_{~\tau_{3}\tau_{4}}^{\mu_{2}}A_{~\mu_{\sigma \left(1\right)}\mu_{\sigma \left(2\right)}}^{\mu_{3}}B_{~\mu_{\sigma\left(3\right)}\mu_{\sigma\left(4\right)}}^{\mu_{4}}\diff x^{\tau_{1}}\wedge \diff x^{\tau_{2}}\wedge \diff x^{\tau_{3}}\wedge \diff x^{\tau_{4}},
\end{equation}%
where $A,B\in \left\{T,Q\right\} ,$ found for $\sigma =\left(1,2,3,4\right) $ and $\sigma =\left(1,4,2,3\right)$ :
\begin{eqnarray}
\tensi{\lambda}_{T_{\cdot\cdot}T_{\cdot\cdot}QT} = T_{~\tau_{1}\tau_{2}}^{\mu_{1}}T_{~\tau_{3}\tau_{4}}^{\mu_{2}}Q_{~\mu_{1}\mu_{2}}^{\mu_{3}}T_{~\mu_{3}\mu_{4}}^{\mu_{4}}\diff x^{\tau_{1}}\wedge \diff x^{\tau_{2}}\wedge \diff x^{\tau_{3}}\wedge \diff x^{\tau_{4}}, \\
\tensi{\lambda}_{T_{\cdot\cdot}T_{\cdot\cdot}QQ} = T_{~\tau_{1}\tau_{2}}^{\mu_{1}}T_{~\tau_{3}\tau_{4}}^{\mu_{2}}Q_{~\mu_{1}\mu_{2}}^{\mu_{3}}Q_{~\mu_{3}\mu_{4}}^{\mu_{4}}\diff x^{\tau_{1}}\wedge \diff x^{\tau_{2}}\wedge \diff x^{\tau_{3}}\wedge \diff x^{\tau_{4}},
\end{eqnarray}
\begin{eqnarray}
\tensi{\tilde{\lambda}}_{T_{\cdot\cdot}T_{\cdot\cdot}TT} = T_{~\tau_{1}\tau_{2}}^{\mu_{1}}T_{~\tau_{3}\tau_{4}}^{\mu_{2}}T_{~\mu_{1}\mu_{4}}^{\mu_{3}}T_{~\mu_{2}\mu_{3}}^{\mu_{4}}\diff x^{\tau_{1}}\wedge \diff x^{\tau_{2}}\wedge \diff x^{\tau_{3}}\wedge \diff x^{\tau_{4}}, \\
\tensi{\tilde{\lambda}}_{T_{\cdot\cdot}T_{\cdot\cdot}TQ} = T_{~\tau_{1}\tau_{2}}^{\mu_{1}}T_{~\tau_{3}\tau_{4}}^{\mu_{2}}T_{~\mu_{1}\mu_{4}}^{\mu_{3}}Q_{~\mu_{2}\mu_{3}}^{\mu_{4}}\diff x^{\tau_{1}}\wedge \diff x^{\tau_{2}}\wedge \diff x^{\tau_{3}}\wedge \diff x^{\tau_{4}}, \\
\tensi{\tilde{\lambda}}_{T_{\cdot\cdot}T_{\cdot\cdot}QQ} = T_{~\tau_{1}\tau_{2}}^{\mu_{1}}T_{~\tau_{3}\tau_{4}}^{\mu_{2}}Q_{~\mu_{1}\mu_{4}}^{\mu_{3}}Q_{~\mu_{2}\mu_{3}}^{\mu_{4}}\diff x^{\tau_{1}}\wedge \diff x^{\tau_{2}}\wedge \diff x^{\tau_{3}}\wedge \diff x^{\tau_{4}}.
\end{eqnarray}%
(the forms $\tensi{\lambda}_{T_{\cdot\cdot}T_{\cdot\cdot}TT}$, $\lambda_{T_{\cdot\cdot}T_{\cdot\cdot}TQ}$ vanish for symmetry reasons).

$\checkmark$ 24 independent forms of type :
\begin{equation}
\tensi{\lambda}_{T_{\cdot\cdot}AB_{\cdot}C_{\cdot}}=T_{~\tau_{1}\tau_{2}}^{\mu_{1}}A_{~\mu_{\sigma \left(1\right)}\mu_{\sigma \left(2\right)}}^{\mu_{2}}B_{~\mu_{\sigma \left(3\right)}\tau_{3}}^{\mu_{3}}C_{~\mu_{\sigma\left(4\right)}\tau_{4}}^{\mu_{4}}\diff x^{\tau_{1}}\wedge \diff x^{\tau_{2}}\wedge \diff x^{\tau_{3}}\wedge \diff x^{\tau_{4}};
\end{equation}%
with $A,B,C\in\left\{T,Q\right\}$. These can be sorted in three groups (giving 8 independent forms each) corresponding to $\sigma \in \left\{ \left(1,3,4,2\right) ,\left(2,3,4,1\right) ,\left(3,4,1,2\right) \right\}$ : 
\begin{eqnarray}
\tensi{\lambda}_{T_{\cdot\cdot}AB_{\cdot}C_{\cdot}} = T_{~\tau_{1}\tau_{2}}^{\mu_{1}}A_{~\mu_{1}\mu_{3}}^{\mu_{2}}B_{~\mu_{4}\tau_{3}}^{\mu_{3}}C_{~\mu_{2}\tau_{4}}^{\mu_{4}}\diff x^{\tau_{1}}\wedge \diff x^{\tau_{2}}\wedge \diff x^{\tau_{3}}\wedge \diff x^{\tau_{4}}, \\
\tensi{\tilde{\lambda}}_{T_{\cdot\cdot}AB_{\cdot}C_{\cdot}} = T_{~\tau_{1}\tau_{2}}^{\mu_{1}}A_{~\mu_{2}\mu_{3}}^{\mu_{2}}B_{~\mu_{4}\tau_{3}}^{\mu_{3}}C_{~\mu_{1}\tau_{4}}^{\mu_{4}}\diff x^{\tau_{1}}\wedge \diff x^{\tau_{2}}\wedge \diff x^{\tau_{3}}\wedge \diff x^{\tau_{4}}, \\
\tensi{\hat{\lambda}}_{T_{\cdot\cdot}AB_{\cdot}C_{\cdot}} = T_{~\tau_{1}\tau_{2}}^{\mu_{1}}A_{~\mu_{3}\mu_{4}}^{\mu_{2}}B_{~\mu_{1}\tau_{3}}^{\mu_{3}}C_{~\mu_{2}\tau_{4}}^{\mu_{4}}\diff x^{\tau_{1}}\wedge \diff x^{\tau_{2}}\wedge \diff x^{\tau_{3}}\wedge \diff x^{\tau_{4}},
\end{eqnarray}

$\checkmark$ 3 independent forms of type%
\begin{equation}
\tensi{\lambda}_{A_{\cdot}B_{\cdot}C_{\cdot}D_{\cdot}}=A_{~\mu_{\sigma \left(1\right)}\tau_{1}}^{\mu_{1}}B_{~\mu_{\sigma \left(2\right)}\tau_{2}}^{\mu_{2}}C_{~\mu_{\sigma \left(3\right)}\tau_{3}}^{\mu_{3}}D_{~\mu_{\sigma\left(4\right)}\tau_{4}}^{\mu_{4}}\diff x^{\tau_{1}}\wedge \diff x^{\tau_{2}}\wedge \diff x^{\tau_{3}}\wedge \diff x^{\tau_{4}};
\end{equation}%
indecomposable such forms are generated by \textit{derangements} $\sigma \in S_{4}$ (i.e. $\sigma \left(i\right) \not=i,$ $\forall i$); all these permutations lead, up to a sign, to the same results :
\begin{eqnarray}
\tensi{\lambda}_{T_{\cdot}T_{\cdot}T_{\cdot}Q_{\cdot}} = T_{~\mu_{2}\tau_{1}}^{\mu_{1}}T_{~\mu_{3}\tau_{2}}^{\mu_{2}}T_{~\mu_{4}\tau_{3}}^{\mu_{3}}Q_{~\mu_{1}\tau_{4}}^{\mu_{4}}\diff x^{\tau_{1}}\wedge \diff x^{\tau_{2}}\wedge \diff x^{\tau_{3}}\wedge \diff x^{\tau_{4}}, \\
\tensi{\lambda}_{T_{\cdot}T_{\cdot}Q_{\cdot}Q_{\cdot}} = T_{~\mu_{2}\tau_{1}}^{\mu_{1}}T_{~\mu_{3}\tau_{2}}^{\mu_{2}}Q_{~\mu_{4}\tau_{3}}^{\mu_{3}}Q_{~\mu_{1}\tau_{4}}^{\mu_{4}}\diff x^{\tau_{1}}\wedge \diff x^{\tau_{2}}\wedge \diff x^{\tau_{3}}\wedge \diff x^{\tau_{4}}, \\
\tensi{\lambda}_{T_{\cdot}Q_{\cdot}Q_{\cdot}Q_{\cdot}} = T_{~\mu_{2}\tau_{1}}^{\mu_{1}}T_{~\mu_{3}\tau_{2}}^{\mu_{2}}T_{~\mu_{4}\tau_{3}}^{\mu_{3}}Q_{~\mu_{1}\tau_{4}}^{\mu_{4}}\diff x^{\tau_{1}}\wedge \diff x^{\tau_{2}}\wedge \diff x^{\tau_{3}}\wedge \diff x^{\tau_{4}},
\end{eqnarray}%
the forms $\tensi{\lambda}_{T_{\cdot}T_{\cdot}T_{\cdot}T_{\cdot}}$ and $\tensi{\lambda}_{Q_{\cdot}Q_{\cdot}Q_{\cdot}Q_{\cdot}}$ obtained this way vanish; also, the symmetry of the indices implies, \emph{e.g.}, $\tensi{\lambda}_{T_.T_.Q_.T_.} \propto \tensi{\lambda}_{T_.T_.T_.Q_.}$.

\subsection{First order Lagrangians}

We will now discuss natural Lagrangians on $J^1T^1_2M$ that are first order in $\tens{L}$ (\emph{i.e.} that depend on $\partial_{\sigma} \ud{L}{\rho}{\mu\nu}$). These must be obtained from the lower rank algebraic invariants from section \ref{sec:lowrankalg} via (horizontal) exterior derivative and wedge products.

We then find 29 independent such terms :

\begin{itemize}
\item 3 independent Lagrangians quadratic in $\partial L$ :
\begin{equation}
\diff_H\tensi{\alpha}_{T}\wedge \diff_H\tensi{\alpha}_{T},~\ \diff_H\tensi{\alpha}_{T}\wedge \diff_H\tensi{\alpha}_{Q},~\ \diff_H\tensi{\alpha}_{Q}\wedge \diff_H\tensi{\alpha}_{Q}.
\end{equation}

These are, yet, all total divergences as, \emph{e.g.}, $\diff_H\tensi{\alpha}_{T}\wedge \diff_H\tensi{\alpha}_{T} = \diff_H\left(\tensi{\alpha}_{T}\wedge \diff_H\tensi{\alpha}_{T}\right) .$ Thus, they produce trivial Euler-Lagrange equations.

\item 14 Lagrangians linear in $\partial L$ involving $\tensi{\alpha}$ and $\tensi{\beta}$ terms  :
\begin{eqnarray}
&&\diff_H\tensi{\alpha}_{T}\wedge \tensi{\alpha}_{T}\wedge \tensi{\alpha}_{Q}, \\
&&\diff_H\tensi{\alpha}_{Q}\wedge \tensi{\alpha}_{T}\wedge \tensi{\alpha}_{Q}, \\
&&\diff_H\tensi{\alpha}_{T}\wedge \tensi{\beta}_{Q_{\cdot}T_{\cdot}},~\ \tensi{\alpha}_{T}\wedge \diff_H\tensi{\beta}_{Q_{\cdot}T_{\cdot}}, \label{eq:pair1} \\
&&\diff_H\tensi{\alpha}_{Q}\wedge \tensi{\beta}_{Q_{\cdot}T_{\cdot}},~\ \tensi{\alpha}_{Q}\wedge \diff_H\tensi{\beta}_{Q_{\cdot}T_{\cdot}}, \\
&&\diff_H\tensi{\alpha}_{T}\wedge \tensi{\beta}_{TT_{\cdot\cdot}},~\ \tensi{\alpha}_{T}\wedge \diff_H\tensi{\beta}_{TT_{\cdot\cdot}}, \\
&&\diff_H\tensi{\alpha}_{Q}\wedge \tensi{\beta}_{TT_{\cdot\cdot}},~\ \tensi{\alpha}_{Q}\wedge \diff_H\tensi{\beta}_{TT_{\cdot\cdot}}, \\
&&\diff_H\tensi{\alpha}_{T}\wedge \tensi{\beta}_{QT_{\cdot\cdot}},~\ \tensi{\alpha}_{T}\wedge \diff_H\tensi{\beta}_{QT_{\cdot\cdot}}, \\
&&\diff_H\tensi{\alpha}_{Q}\wedge \tensi{\beta}_{QT_{\cdot\cdot}},~\ \tensi{\alpha}_{Q}\wedge \diff_H\tensi{\beta}_{QT_{\cdot\cdot}}.\label{eq:pair6}
\end{eqnarray}

Of these, each pair \eqref{eq:pair1}--\eqref{eq:pair6} above is made by Lagrangians which differ by a total derivative\footnote{Indeed, since the $\tensi{\alpha}_i$ terms are of rank 1, $\diff_H\left(\tensi{\alpha}_i\wedge\tensi{\beta}_j\right) = \diff_H\tensi{\alpha}_i\wedge\tensi{\beta}_j - \tensi{\alpha}_i\wedge\diff_H\tensi{\beta}_j$.}, hence, they produce the same Euler-Lagrange equations. We obtain, thus, 8 possibilities of producing different Euler-Lagrange equations. These correspond \emph{e.g.} to the wedge products of the two first order rank 2 terms in $\tens{L}$ ($\diff_H\tensi{\alpha}_T$ and $\diff_H\tensi{\alpha}_Q$) with the four algebraic rank 2 terms in $\tens{L}$ (\emph{i.e.} $\tensi{\alpha}_T\wedge\tensi{\alpha}_Q$, $\tensi{\beta}_{Q_.T_.}$, $\tensi{\beta}_{T{T}_{..}}$ and $\tensi{\beta}_{QT_{..}}$).


\item 12 Lagrangians linear in $\partial L$ of type $\diff_H \tensi{\gamma}_{ABC}$.

These are all obtained by horizontal exterior differentiation; hence, their Euler-Lagrange expressions identically vanish.
\end{itemize}

To conclude, on 4-dimensional manifolds, one can build 65 algebraic invariants, but only 8 nontrivial first order, independent Lagrangians, using the distortion tensor alone.

Interesting remark : there are only one algebraic invariant Lagrangian $\tensi{\alpha}_{Q}\wedge \tensi{\gamma}_{Q_{\cdot}Q_{\cdot}Q_{\cdot}}$ and no first order, nontrivial Lagrangians which can be built from the symmetric part of $\tens{L}$ only.

\end{document}